\documentclass[letter,11pt]{article}
\usepackage{fullpage}
\usepackage{amsmath}
\usepackage{amsfonts}
\usepackage{amsthm}
\usepackage{amssymb}
\usepackage{dsfont}
\usepackage{graphicx}
\usepackage{caption}
\usepackage{subcaption}

\newtheorem{theorem}{Theorem}
\newtheorem{lemma}[theorem]{Lemma}

\newtheorem{cor}[theorem]{Corollary}

\newcommand{\bigO}{\mathcal{O}}

\DeclareMathOperator*{\argmax}{\arg\!\max}

\DeclareMathOperator*{\E}{\mathbb{E}}
\DeclareMathOperator{\R}{\mathbb{R}}
\DeclareMathOperator*{\one}{\mathds{1}}
\DeclareMathOperator*{\vv}{\mathbf{v}}
\DeclareMathOperator*{\q}{\mathbf{q}}
\DeclareMathOperator*{\rr}{\mathbf{r}}

\DeclareMathOperator{\F}{\mathbf{F}}
\DeclareMathOperator*{\HH}{\mathbf{H}}

\DeclareMathOperator*{\ud}{\text{d}}

\DeclareMathOperator{\calV}{\mathcal{V}}
\DeclareMathOperator{\I}{\mathbf{I}}
\DeclareMathOperator{\area}{\text{area}}
\DeclareMathOperator{\supp}{\text{supp}}

\allowdisplaybreaks
\sloppy

\title{Bounds on the revenue gap of linear posted pricing\\for selling a divisible item}

\author{Ioannis Caragiannis\thanks{Department of Computer Science, Aarhus University, Denmark. Email: {\tt iannis@cs.au.dk}.} \and Zhile Jiang\thanks{Department of Computer Science, Aarhus University, Denmark. Email: {\tt zhile@cs.au.dk}.} \and Apostolis Kerentzis\thanks{Tactical Air Force, Hellenic Air Force, Greece. Email: {\tt apostolos.kerentzis@haf.gr}.}}
\date{}

\begin{document}
\maketitle

\begin{abstract}
Selling a perfectly divisible item to potential buyers is a fundamental task with apparent applications to pricing communication bandwidth and cloud computing services. Surprisingly, despite the rich literature on single-item auctions, revenue maximization when selling a divisible item is a much less understood objective. We introduce a Bayesian setting, in which the potential buyers have concave valuation functions (defined for each possible item fraction) that are randomly chosen according to known probability distributions. Extending the sequential posted pricing paradigm, we focus on mechanisms that use linear pricing, charging a fixed price for the whole item and proportional prices for fractions of it. Our goal is to understand the power of such mechanisms by bounding the gap between the expected revenue that can be achieved by the best among these mechanisms and the maximum expected revenue that can be achieved by any mechanism assuming mild restrictions on the behavior of the buyers. Under regularity assumptions for the probability distributions, we show that this revenue gap depends only logarithmically on a natural parameter characterizing the valuation functions and the number of agents. Our results follow by bounding the objective value of a mathematical program that maximizes the ex-ante relaxation of optimal revenue under linear pricing revenue constraints.
\end{abstract}

\section{Introduction}
Selling a single item to potential buyers is probably the most fundamental problem in microeconomics, with amazing related discoveries during the last sixty years. In the most standard model, there are several potential buyers (the {\em agents}), with {\em private valuations} for the item. The celebrated {\em Vickrey auction}~\cite{V61} allocates the item to the {\em highest bidder} and charges her the second-highest bid as price. At least in theory, it achieves two important goals. First, it motivates the agents to act {\em truthfully} and report their true valuations as bids. Secondly, it maximizes {\em social welfare}, in the sense that it allocates the item to the agent who values it the most.

In contrast, for the equally important objective of {\em revenue maximization}, no solution is possible unless some information is available for the agent valuations. In the classical {\em Bayesian setting}, which is the most appealing for studying revenue maximization, each agent is assumed to have a random private valuation drawn from a probability distribution. Information about the probability distributions is known to the seller, who can use it to adjust the auction parameters and maximize her expected revenue. Under a regularity assumption about the agents' distributions, the revenue-optimal auction allocates the item to the agent with the highest non-negative {\em virtual bid} (if any) and charges her the minimum ``critical'' bid she could still use to win the item. Among other important results, Myerson's seminal paper~\cite{M81} presents the related theory for revenue-optimal mechanism design, which can be applied to any {\em single-parameter} environment where the private valuation of each agent is just a scalar.

Inevitably, revenue-optimal auctions can be {\em complicated} and {\em counter-intuitive}. For example, as the virtual bid of an agent depends on the distribution from which she draws her valuations, they may discriminate among agents, setting different reserve prices for each of them. Also, the winning bid is not necessarily the highest one. So, the auctions that are used in practice are much simpler. A standard practice is to run a second-price auction with an {\em anonymous reserve} (i.e., common for all agents). Another even simpler class of mechanisms, known as {\em sequential posted pricing}~\cite{CHMS10}, approaches the agents in a predefined order and makes a take-it-or-leave-it offer to each of them until the item is bought. When approached, a potential buyer can either accept to buy the item at the proposed price or refuse to buy if the proposed price exceeds her valuation for the item. In general, sequential posted pricing can become notoriously complex (pricing in the airline industry is an annoying example from practice) as the best possible price for an agent can depend on statistical information about the agents' valuations and on the decisions of agents that were approached previously. 

However, sequential posted pricing is simple and well-understood when the same {\em anonymous} price (computed using statistical information about the valuations) is proposed to all agents. Constant approximations to optimal revenue (or a constant {\em revenue gap}) by sequential posted pricing with an anonymous price is possible under a regularity assumption for the valuations. Such statements (e.g., in~\cite{JLQTX19}) usually read as follows: for any set of buyers who draw their valuations from independent and regular probability distributions, there is a price depending only on these distributions that yields an expected revenue that is a constant fraction of the expected revenue returned by Myerson's optimal auction.

In this work, we deviate from the above setting and assume that the item for sale is {\em perfectly divisible}. Different agents can get fractions of the item, while some fraction of the item can stay unsold. Agent preferences and behavior are much richer now. Each agent is still interested in obtaining the whole item but gets value from fractions of it as well. In particular, each agent has a private {\em valuation function} that indicates her value for fractions of the item between $0$ and $1$. Typically, these functions are non-decreasing and concave, indicating non-increasing non-negative marginal value. Pricing of a divisible item can be very complicated in this setting. For example, refining sequential posted pricing, we can imagine a seller proposing a {\em menu of prices} to the agent, with a different price for different fractions of the item. In its extreme, the scheme could include different {\em pricing functions}, discriminating among agents. Then, taking into account her valuation function when facing a proposed pricing function, the agent can demand a desired fraction of the item that maximizes her utility.

We are interested in simple sequential posted pricing mechanisms that specifically use {\em linear pricing}. In particular, we restrict pricing to a fixed price for the whole item and proportional prices for fractions. Do such mechanisms have nearly-optimal revenue? Can we have statements like the one for anonymous pricing above? We will address these very challenging questions, making several conceptual and technical contributions.

\subsection{Our contribution and techniques} 
Our first conceptual contribution is an extension of the standard Bayesian setting. We assume that each agent has a random non-decreasing concave valuation function. An important assumption we make is that the derivative (or {\em marginal}) of the valuation function of each agent at each fraction point follows a regular probability distribution. All the information about the probability distributions of each agent is known to the seller.

As our setting is very far from the single-parameter environment of an indivisible single item and agents with single-valued valuations for it, Myerson's characterization of the revenue-maximizing allocation does not carry over. Actually, similar simple characterizations seem out of reach in our case,\footnote{The approach of Cai et al. \cite{CDW12} might be used to obtain (an approximation of) the revenue-optimal mechanism as a solution of a mathematical program. However, this is far from Myerson's characterization in terms of simplicity.} but we would still like to have an estimate of the optimal expected revenue that can be extracted by the agents or a reasonable benchmark for it. Hence, we assume a {\em shortsighted behavior} for agents. Each agent responds to a pricing function with a demand for as large a fraction of the item as possible so that no smaller fraction yields her negative marginal utility. This allows us to reason about the revenue extracted by quite general mechanisms.

Our main goal is to provide bounds on the {\em revenue gap} between the highest expected revenue that can be achieved among given shortsighted agents and the highest expected revenue that can be achieved by a linear posted pricing mechanism. We do so by extending the concept of the {\em ex-ante relaxation} previously considered in \cite{A14,CHK07,AHNPY19}. Instead of comparing the revenue of the best linear pricing to the optimal expected revenue that can be extracted by shortsighted agents, we compare it to the best possible revenue of an ex-ante relaxation of the  problem, in which an {\em expected} fraction of at most $1$ is sold to them. We formulate the question regarding the revenue gap as a mathematical program that extends a mathematical program considered earlier for bounding the revenue gap of anonymous pricing in the indivisible item setting~\cite{AHNPY19}. Essentially, assuming $n$ shortsighted agents, our mathematical program has as variables the {\em selling probabilities} of the ``optimal'' mechanism and the probability distributions of the valuation marginals. The objective is to find an instance which yields the highest possible expected revenue under the constraints that the average total fraction of the item that is sold to the agents and the expected revenue of any linear pricing do not exceed $1$. Then, the maximum objective value of the mathematical program is an upper bound of the revenue gap.

Our technical contribution consists of bounds on the objective value of this mathematical program in three scenarios. Interestingly, these bounds depend on a natural parameter of the valuation functions, which we call {\em initial marginal over total value} (or IMOTV, for short), defined as the ratio of the marginal of the valuation function at point $0$ over its value at point $1$.

First, as a warm up, we consider the case of a single agent and bound the revenue gap by $\bigO(\ln{\kappa})$, where $\kappa$ denotes the maximum IMOTV of the valuation functions. This result holds even for non-regular distributions of the marginals of the valuations functions. A simple instance shows that it is tight, even assuming regular marginal distributions. 

Second, we bound the objective value of the mathematical program for the revenue gap by exploiting its relation to the mathematical program of~\cite{AHNPY19} via a black-box reduction. In this way, we obtain an $\bigO(\kappa^2)$ upper bound. This result is interesting for two reasons. First, it showcases that the maximum IMOTV of the valuation functions is the parameter of importance in the multi-agent case as well. Second, the proof applies to the more general setting of regular initial marginal distributions only; all valuation marginals at points different than $0$ can be non-regular.

Finally, our main result is an $\bigO(\ln{\kappa}+\ln{n})$ upper bound, where $n$ is the number of agents. The main idea of the proof is to identify an almost optimal feasible solution to the mathematical program, which has nice properties (e.g., for every agent, the marginal contribution of each item fraction to the expected revenue is upper bounded by a polynomial of $n$ and $\kappa$) that allow us to bound the maximum objective value, assisted by the constraints and additional technical statements for random variables. We remark that previous work on proving bounds on revenue gaps by bounding the objective value of mathematical programs~\cite{AHNPY19,JJLZ22,JLQTX19,JLTX19} relied heavily on extreme regular probability distributions for valuations that have a {\em triangle-shaped revenue-quantile curve}. Unfortunately, such tools cannot be used in our case since the distributions of the valuation marginals should correspond to a probability distribution over concave valuation functions. We bypass this technical obstacle by a simple geometric lemma in the proof of our third bound.

\subsection{Related work} 
Our paper falls within the ``simple vs.~optimal'' line of research in Bayesian mechanism design, initiated with the work of Harline and Roughgarden~\cite{HR09}, who studied the revenue gap of the second price auctions with an anonymous reserve. In the same thread, sequential posted pricing with an anonymous price has received much attention recently. Alaei et al.~\cite{AHNPY19} prove that this mechanism achieves a constant approximation of $2.72$ of the optimal expected revenue. They use a general strategy that we discuss in more detail later in Section~\ref{sec:indivisible}. The tight bound of $2.62$ follows by two papers by Jin et al.~\cite{JLQTX19, JLTX19}. We remark that all results for anonymous pricing carry over to our setting if the valuation functions are restricted to be linear. Indeed, the behavior of an agent with a linear valuation function against a linear pricing with a price of $p$ per unit is either to buy the whole item if her value for the whole item is higher than $p$ or refuse to buy otherwise. We mostly focus on non-linear valuation functions where divisibility differentiates our problem a lot.

A recent extension of single-item mechanisms is considered by Jin et al.~\cite{JJLZ22}, who study the setting with $k$ item copies available to be sold to potential buyers. However, a unit-demand assumption restricts the agent valuations to a single scalar value for the single item copy 
they aim to buy. Their main result is a tight revenue gap of $\Theta(\ln{k})$ under the unit-demand case. Instead, our modeling can be used to accommodate {\em multi-demand} agents with concave piecewise-linear valuation functions (with $k$ segments and, hence, of IMOTV at most $k$). Our third result implies a revenue gap of $O(\ln{k}+\ln{n})$ in this more general case.

A recent line of research that also deviates from the standard agent modeling assumptions focuses on agents with non-linear utility~\cite{FHL19,FHL23} that, e.g., captures budget constraints or risk aversion, and includes bounds on the revenue gap of anonymous pricing for such agents~\cite{FHL19}. However, these works usually assume single-parameter agents and a non-linear definition of their utility, while we focus on a standard quasi-linear definition of utility in a multi-parameter environment of a specific structure for the definition of valuations.

In the economics literature, divisible items have received attention, with the focus being mostly on whether bundling can be beneficial or not~\cite{P83} and on how to structure the sale as many auctions of shares~\cite{W79}. Perfect divisibility is considered in~\cite{LST15,SW11}. In another more related direction, the operations research community has considered resource allocation mechanisms to divide an item based on signals received by the agents, that are further used to impose payments to the agents. Among them, the proportional mechanism, first defined by Kelly~\cite{K97} and analyzed by Johari and Tsitsiklis~\cite{JT04} is the most popular one. Even though its social welfare has been analyzed extensively in stochastic settings that are very similar to ours~\cite{JT04,CV16,CST16} (see also~\cite{CV18} and the references therein), no revenue guarantees are known.
	
\subsection{Roadmap} The rest of the paper is structured as follows. We present our setup, introducing several notions and making connections to the relevant literature in Section~\ref{sec:prelim}. Our mathematical program for the revenue gap is also introduced there. In Section~\ref{sec:single-agent}, we present our tight revenue gap for a single agent. Section~\ref{sec:black-box} is devoted to the multi-agent case; our black-box reduction to the result of Alaei et al.~\cite{AHNPY19} is presented there. Our strongest logarithmic revenue gap is presented in Section~\ref{sec:direct}. We conclude with a short discussion on open problems in Section~\ref{sec:open}.

\section{Concepts, techniques, and notation}\label{sec:prelim}
We denote by $n$ the number of {\em agents} and use the integers in set $[n]=\{1,2, ..., n\}$ to identify them. A {\em valuation function} for agent $i$ is a monotone non-decreasing concave function $v_i:[0,1]\rightarrow \R_{\geq 0}$. We use $v'$ to refer to the (left) derivative of $v$.

Our main focus is on {\em linear posted pricing} mechanisms. Such a mechanism uses a scalar price $p$ and considers the agents in a predefined order. Whenever an agent is considered, she responds with her {\em demand}, i.e., the fraction of the item she is willing to buy at a price of $p$ per unit. The mechanism then allocates the agent a fraction of the item equal to either her demand or the fraction of the item that has not been previously allocated to other agents, whichever is smaller, and charges the agent an amount that is equal to $p$ multiplied with the fraction of the item the agent got. The demand $D(v,p)$ of an agent with valuation function $v$ at a price-per-unit $p$ is defined as the maximum fraction that maximizes the utility of the agent $D(v,p)\in \argmax_{x\in [0,1]}{\{v(x)-p\cdot x\}}$. Due to the concavity of the valuation functions, the demand is the maximum fraction $x\in [0,1]$ so that $v'(x)\geq p$. Hence, when the $n$ agents have a vector of valuation functions $\vv=\langle v_1, v_2, ..., v_n\rangle$, the revenue of the linear posted pricing mechanism that uses price-per-unit $p$ is $p\cdot \min\left\{1,\sum_{i\in [n]}{D(v_i,p)}\right\}$.

We will assume that all valuation functions $v$ have an {\em initial marginal over total value} (or IMOTV for short) ratio $v'(0)/v(1)$ bounded by a parameter $\kappa\geq 1$. We denote by $\calV_\kappa$ the set of all such valuation functions. 

We consider a Bayesian setting in which each agent $i\in [n]$ draws her valuation function $v_i$, independently from the other agents, from a publicly known probability distribution $\F_i$ over the valuation functions in $\calV_\kappa$. We write $v_i\sim \F_i$ and $\vv\sim \F$ to denote that the valuation function of agent $i$ and the vector of valuation functions $\vv=\langle v_1, v_2, ..., v_n\rangle$ are drawn randomly according to the probability distribution $\F_i$ and the joint probability distribution $\F$, respectively. 
For $i \in [n]$, let $v_i\sim \F_i$. For $x \in [0,1]$, we denote by $F_{i,x}$ the cumulative density function of the probability distribution of the derivative $v_i'(x)$ of the valuation function $v_i$ at point $x$. We will often restrict the probability distribution $\F$ so that $F_{i,x}$ is regular for $i\in[n]$ and $x\in[0,1]$. 


Our modeling assumption includes, as a special case, the model of \cite{SW11}, where the valuation function $v_i(z)=t_i\cdot h_i(z)$ has a scalar part $t_i$ and a function part $h_i(z)$. The function part is known to the seller and can be used by the mechanism. The scalar part is drawn from a probability distribution, in which regularity can be imposed. This is essentially a {\em single-parameter} environment, where it is a simple exercise to apply Myerson's approach~\cite{M81} of maximizing revenue by maximizing virtual welfare. This means that revenue-optimal mechanisms have a well-known structure in this setting. In contrast, our setting is {\em multi-parameter}, and our modeling allows for richer distributions over valuation functions. Unfortunately, Myerson’s machinery cannot be applied anymore, so we do not have any clean form of the revenue-maximizing mechanism. This makes even the definition of the revenue gap in our case, as well as the proofs of bounds on its value, more challenging.

\subsection{Selling an indivisible item with an anonymous price}\label{sec:indivisible}
We will adapt the approach of \cite{AHNPY19} for selling an indivisible item using sequential posted pricing and an {\em anonymous price}. In that classic setting, each agent $i\in [n]$ has a scalar valuation $v_i$ drawn from a probability distribution with regular cumulative density function $H_i$. 

Alaei et al.~\cite{AHNPY19} use the mathematical program (\ref{eq:mp-indivisible}) below to bound the gap between the optimal expected revenue and the maximum expected revenue that can be achieved with an anonymous price. Actually, instead of comparing directly to the optimal revenue (in fact, this is done in follow-up work by Jin et al.~\cite{JLQTX19}), they compare to the best possible revenue of the {\em ex-ante relaxation}, i.e., the maximum expected revenue that can be achieved by a mechanism that sells the item to at most one agent {\em in expectation}. Their mathematical program uses as variables (1) the regular cumulative density function $H_i$ of the valuation of agent $i$ for the item and (2) the probability $r_i$ that the item is given to agent $i$ in the revenue-maximizing allocation for the ex-ante relaxation. 
\begin{align}\label{eq:mp-indivisible}
\mbox{maximize} &\sum_{i\in [n]}{r_i \cdot H^{-1}_i(1-r_i)}\\\nonumber
\mbox{subject to} & \sum_{i\in [n]}{r_i}\leq 1\\\nonumber
& p\cdot \left(1-\prod_{i\in [n]}{H_i(p)}\right) \leq R, \forall p>R\\\nonumber
& r_i \in [0,1], \forall i\in [n]\\\nonumber
& H_i \mbox{ is a regular cdf}, \forall i\in [n]
\end{align}
Viewed together with the first constraint, the objective of the mathematical program (\ref{eq:mp-indivisible}) is to maximize the revenue of the ex-ante relaxation. The second set of constraints implements the restriction that no anonymous pricing $p$ yields a revenue higher than $R$. The ratio between the objective value of (\ref{eq:mp-indivisible}) and $R$ is an upper bound to the revenue gap. The main result of \cite{AHNPY19} is as follows. 
\begin{theorem}[Alaei et al.~\cite{AHNPY19}]\label{thm:rev-Alaei-et-al}
	For every $R\geq 1$, the objective value of the mathematical program (\ref{eq:mp-indivisible}) is at most $e\cdot R$.
\end{theorem}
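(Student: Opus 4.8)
The plan is to solve the program~(\ref{eq:mp1}) by combining its two families of constraints with the structural consequences of regularity. First I would rewrite the objective in revenue-curve form: setting $R_i(q)=q\,H_i^{-1}(1-q)$, regularity makes each $R_i$ concave on $[0,1]$ with $R_i(0)=0$, and the objective becomes $\sum_i R_i(r_i)$ subject to $\sum_i r_i\le 1$. Writing $p_i=H_i^{-1}(1-r_i)$ for the per-agent price, so that $r_i=1-H_i(p_i)$, the objective is exactly the ex-ante revenue $\sum_i r_i p_i$. From the second family of constraints I would extract two facts. Applying it at a single price $p$ (and noting the trivial case $p\le R$) shows that each agent's standalone monopoly revenue $\max_p p\,(1-H_i(p))$ is at most $R$, so every $R_i(r_i)\le R$. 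More importantly, $p\,(1-\prod_i H_i(p))\le R$ for $p>R$ rearranges to $\prod_i H_i(p)\ge 1-R/p$; taking logarithms and using $\ln H_i(p)\le -(1-H_i(p))$ yields the aggregate survival bound
\[
G(p):=\sum_i\bigl(1-H_i(p)\bigr)\le \ln\frac{p}{p-R}\qquad(p>R),
\]
which is the quantity driving the probability of a sale under anonymous pricing, since $1-\prod_i H_i(p)\ge 1-e^{-G(p)}$.

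The core of the argument is to show that a single well-chosen anonymous price already collects a $1/e$ fraction of $\sum_i r_i p_i$; since that price is feasible, its revenue is at most $R$, giving the bound $eR$. I would obtain this by lower-bounding the best anonymous revenue $M=\max_p p\,(1-\prod_i H_i(p))\ge \max_p p\,(1-e^{-G(p)})$ by a weighted average of $p\,(1-e^{-G(p)})$ against a carefully chosen price kernel, and comparing it to $\sum_i R_i(r_i)=\int_0^\infty A(p)\,\mathrm{d}p$, where $A(p)=\sum_{i:p_i>p} r_i$. The factor $e$ is exactly what emerges from the exponential $1-e^{-G}$: the extremal configuration saturates $G(p)=\ln\frac{p}{p-R}$ and collapses the program to a one-dimensional variational problem whose value I expect to evaluate to $e$, via an optimization of the shape $\sup_t t\,(1-e^{-1/t})$. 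To make this reduction rigorous I would argue that the maximum of~(\ref{eq:mp1}) is attained at extreme regular distributions, i.e.\ truncated equal-revenue (``triangular'') laws whose revenue curves are piecewise linear; each such agent is described by two parameters, and a smoothing/exchange argument should show that replacing a general regular $H_i$ by such a law does not decrease the objective while not increasing any anonymous revenue. On these instances the program becomes finite-dimensional (or, in the many-agents limit, a clean continuum problem) and directly solvable.

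The step I expect to be the main obstacle is extracting the full strength of regularity, namely the concavity of the revenue curves rather than mere monotonicity of prices. This is essential: without it one can build fat-tailed survival functions for which the naive estimates stay finite while $\sum_i r_i p_i$ diverges, so the concavity of the $R_i$ is precisely what ties the ex-ante revenue $\sum_i R_i(r_i)$ to the aggregate survival $G$ that governs anonymous revenue and controls the high-price tail. Concretely, the difficulty is to bound each $R_i(r_i)$ by an appropriate integral of $1-H_i$ against the same kernel used on the anonymous side, uniformly over regular $H_i$, and then to verify that the aggregated estimate optimizes to the constant $e$ rather than a larger number. Secondary technical points I would need to dispatch are the reduction to the extreme points of the set of regular cdfs (handling atoms and ironing), the standard continuity and normalization assumptions that make $H_i^{-1}$ well-defined, and the boundary case $\sum_i r_i<1$.
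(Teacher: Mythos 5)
First, a point of context: the paper does not prove this statement at all. It is imported verbatim from Alaei et al.~\cite{AHNPY19} and used as a black box; the only ``proof content'' in the paper is the remark that the original result is for $R=1$ and that the version for general $R\geq 1$ is without loss of generality (it follows by scaling all prices and values by $R$: replace $H_i(t)$ by $H_i(Rt)$, which preserves regularity, maps the constraint family at prices $p>R$ to the family at prices $p>1$, and divides the objective by $R$). So the relevant comparison is between your attempt and the actual proof in \cite{AHNPY19}, which your sketch broadly parallels: the quantile/revenue-curve reformulation, the per-agent monopoly bound $R_i(r_i)\leq R$, the aggregate survival bound $G(p)\leq \ln\frac{p}{p-R}$ via $\ln H_i \leq -(1-H_i)$, and the intended reduction to truncated equal-revenue (``triangle'') distributions are all genuine ingredients of their argument. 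You could also have spared yourself the general-$R$ bookkeeping by invoking the scaling reduction to $R=1$ up front.

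The genuine gap is that the decisive steps are announced rather than carried out, and the one closed-form guess you commit to is wrong. Everything up to the bound $G(p)\leq\ln\frac{p}{p-R}$ is correct but cheap; by itself it gives nothing better than $nR$ (via $R_i(r_i)\leq R$), precisely because, as you yourself observe, one still has to convert the concavity of each $R_i$ into a bound on $\sum_i r_i p_i$ in terms of the aggregate survival $G$. That conversion is the heart of the theorem: in \cite{AHNPY19} it is done by an extreme-point argument reducing to triangle revenue curves and then solving the resulting optimization over measures of triangles subject to the saturated constraint $G(p)=\ln\frac{p}{p-R}$ together with the ex-ante mass constraint $\sum_i r_i\leq 1$; neither the exchange/smoothing argument nor the variational computation appears in your proposal beyond ``should show'' and ``I expect.'' Moreover, the one-dimensional problem you predict, $\sup_t t\left(1-e^{-1/t}\right)$, has value $1$ (approached as $t\to\infty$), so it cannot be the source of the constant $e$; the constant arises from the measure optimization just described, not from that expression. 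As it stands, the proposal is a plausible research plan that correctly identifies where the difficulty lies, but it does not constitute a proof of the stated bound.
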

Our second bound on the revenue gap when selling a divisible item with linear pricing will use the result of \cite{AHNPY19} as a black box. We remark that the original result in \cite{AHNPY19} uses specifically $R=1$. The extension we consider here is without loss of generality and is used to simplify our exposition in Section~\ref{sec:black-box}. 

\subsection{Shortsighted agents}\label{sec:benchmark}
Let us now return to our setting. As a benchmark for the comparison and assessment of the revenue of linear pricing, we consider general pricing schemes that may discriminate between different agents using different {\em pricing functions} of the form $p:[0,1]\rightarrow \R_{\geq 0}$, where $p(x)$ denotes the price an agent is asked to pay for getting an item fraction of $x$. We extend the notion of the demand by slightly abusing notation as follows. We say that, given a pricing function $p$ with left derivative $p'$, the demand $D(v,p)$ of an agent with a valuation function $v$ is the maximum item fraction $x$ so that $v'(t)\geq p'(t)$ for every $t\in [0,x]$. Thus, we assume that agents are {\em shortsighted} and aim to only locally maximize their utility. 

Let us explain the notion of shortsightness with an example. Consider an agent with the valuation function
\begin{align*}    
v(x) &=\begin{cases}
    12x & x\in [0,1/3]\\
    1+9x & x\in [1/3,2/3]\\
    5+3x & x\in [2/3,1]
    \end{cases}
\end{align*}
and the pricing function
\begin{align*}
    p(x) &= \begin{cases}
        10x & x\in [0,1/2]\\
        3+4x & x\in [1/2,1]
    \end{cases}.
\end{align*}
The utility of the agent as a function of the item fraction $x$ is now
\begin{align*}
    \text{utility}(x) &=\begin{cases}
        2x & x\in [0,1/3]\\
        1-x & x\in [1/3,1/2]\\
        -2+5x & x\in [1/2,2/3]\\
        2-x & x\in [2/3,1]
\end{cases}.
\end{align*}
Hence, it increases as $x$ ranges from point $0$ to point $1/3$ up to the value of $2/3$, then decreases and gets the value of $1/2$ at point $1/2$, then increases again getting the highest value of $4/3$ at point $2/3$, and decreases afterwards. According to our assumption of a shortsighted agent, her demand is $D(v,p)=1/3$ (i.e., the first local maximum of the utility) even though the utility is globally maximized later at point $2/3$.

Our objective is to prove bounds on the {\em revenue gap} of linear posted pricing, i.e., on the maximum ratio, over all possible instances that follow our setting, between the expected revenue that can be achieved with shortsighted agents and the maximum revenue that can be achieved with linear posted pricing. We remark that, due to the concavity of the valuation functions, shortsighted agents actually respond as utility-maximizers to linear pricing.

\subsection{A mathematical program for the revenue gap}\label{subsec:rev:mp}
Extending the approach of Alaei et al.~\cite{AHNPY19}, we also use an ex-ante relaxation to upper-bound the maximum revenue that can be achieved with shortsighted agents. We consider general pricing schemes consisting of a pricing function $p_i:[0,1]\rightarrow \R^+$ for each agent $i\in [n]$. For every agent $i\in [n]$ and every $x\in [0,1]$, let $q_i(x)$ be the probability that a fraction of at least $x$ is bought by agent $i$ with pricing $p_i$ (i.e., $D(v_i,p_i)\geq x$). For $i\in [n]$ and $x\in [0,1]$, observe that the shortsighted agent $i$ will have a demand $D(v_i,p_i)$ that is at least $x$ only if $v'_i(x)\geq p'_i(x)$. Hence, $q_i(x)\leq\Pr[v'_i(x)\geq p'_i(x)]=1-F_{i,x}(p'_i(x))$ and, equivalently, $p'_i(x)\leq F_{i,x}^{-1}(1-q_i(x))$. Using the quantities $q_i(x)$ and $p'_i(x)$, we can upper-bound the expected payment by agent $i$ as $\int_0^1{q_i(x) \cdot p'_i(x)\ud x}\leq \int_0^1{q_i(x)\cdot F_{i,x}^{-1}(1-q_i(x))\ud x}$. Intuitively, $q_i(x)$ denotes the probability that agent $i$ buys her $x$-th point of the item, $p'_i(x)$ denotes the payment increase due to this point, and the quantity $q_i(x)\cdot F_{i,x}^{-1}(1-q_i(x))$ represents (an upper bound of) the marginal contribution of agent $i$ and point $x$ to the expected revenue. Hence, the maximum expected revenue that can be achieved by shortsighted agents is upper-bounded by the quantity
\begin{align}\label{eq:optimal-revenue}
\sum_{i\in [n]}{\int_0^1{q_i(x)F_{i,x}^{-1}(1-q_i(x))\ud x}},
\end{align}
under the constraint that {\em on average}, no more than the whole item is available for purchase by the agents. 
We will refer to expression (\ref{eq:optimal-revenue}) as the expected revenue of the ex-ante relaxation.

Next, we extend the approach of~\cite{AHNPY19} to relate the ex-ante revenue relaxation with shortsighted agents to the maximum revenue that can be achieved with linear pricing. To determine the revenue gap, we can further require that any linear pricing with price per unit $p$ has revenue at most $1$ and ask: ``How large can the expected revenue of the ex-ante relaxation be?'' The answer to this question is given by the  mathematical program (\ref{eq:mp-original}) below. The objective is to maximize expression (\ref{eq:optimal-revenue}), which upper-bounds the ex-ante relaxation of the revenue that can be achieved with shortsighted agents. Constraint (\ref{eq:constraint-bounded-q}) ensures that the average fraction allocated to the agents does not exceed the available item. The LHS of (\ref{eq:constraint-min}) is just the expected revenue of linear pricing with a price $p$ per unit. The mathematical program has as variables the probability distributions over the valuation functions of $\calV_\kappa$, which are further described by cdf's $F_{i,x}$ for $i\in [n]$ and $x\in [0,1]$, and the probabilities $q_i(x)$ for $i\in [n]$ and $x\in [0,1]$; by its intending meaning as the probability that agent $i$ gets a fraction of at least $x$, $q_i$ (called {\em selling probability} from now on) is required to be monotone non-increasing in $x$. 
\begin{align}\label{eq:mp-original}
\mbox{maximize} &\sum_{i\in [n]}{\int_0^1{q_i(x)\cdot F^{-1}_{i,x}(1-q_i(x))}\ud x}\\\label{eq:constraint-bounded-q}
\mbox{subject to} & \sum_{i\in [n]}{\int_0^1{q_i(x)}\ud x}\leq 1\\\label{eq:constraint-min}
& p\cdot\E_{\vv\sim \F}\left[\min\left\{1,\sum_{i\in [n]}{D(v_i,p)}\right\}\right]\leq 1, \forall p>1\\\nonumber
& \mbox{monotone non-increasing } q_i: [0,1]\rightarrow [0,1], \forall i\in [n], x\in [0,1]\\\nonumber
& \supp(\F_i)\subseteq \calV_\kappa, F_{i,x} \mbox{ is a regular cdf}, \forall i\in [n], x\in [0,1]
\end{align}

In the following, we frequently abbreviate the quantity $q_i(x)\cdot F_{i,x}^{-1}(1-q_i(x))$ as $R_{q_i,\F_i}(x)$ for $i\in [n]$ and $x\in [0,1]$.

\section{Warming up: the single-agent case}\label{sec:single-agent}
We begin by presenting a tight bound on the revenue gap for the case of a single agent. For simplicity of exposition, we drop the index $i$ from notation. Actually, we use a simplified version of the mathematical program (\ref{eq:mp-original}) where we require neither the selling probability function $q(x)$ to be monotone non-increasing nor the probability distributions $F_x$ to be regular. Since we have just a single agent, the constraint (\ref{eq:constraint-bounded-q}) is now redundant. In particular, we will upper-bound the objective value of the following simplified mathematical program:
\begin{align}\label{eq:mp-single-agent}
\mbox{maximize } & \int_0^1{q(x)\cdot F^{-1}_x(1-q(x))}\ud x\\\nonumber
\mbox{subject to } & \,p\cdot \E_{\vv\sim \F}\left[D(v,p)\right]\leq 1, \forall 
p>1\\\nonumber
& q(x)\in [0,1], \forall x\in [0,1]\\\nonumber
& \supp(\F)\subseteq \calV_{\kappa}, F_x: [0,1]\rightarrow [0,1] \mbox{ is a cdf}, \forall x\in [0,1]
\end{align}

To do so, we will need three technical lemmas. The first one follows easily by the concavity of the valuation functions and the optimality of selling probabilities.

\begin{lemma}\label{lem:mon-single-agent}
    Let $(\q,\F)$ be an optimal solution of the mathematical program (\ref{eq:mp-single-agent}). Then, $R_{q,\F}(x)$ is non-increasing in $x$.
\end{lemma}

\begin{proof}
    Let $0\leq x_1<x_2\leq 1$. Due to the concavity of the valuation functions, $F_x^{-1}(t)$ is monotone non-increasing in $x$ for every $t\in [0,1]$. Thus,
    \begin{align*}
        R_{q,\F}(x_2) &=q(x_2)\cdot F_{x_2}^{-1}(1-q(x_2)) \leq q(x_2)\cdot F_{x_1}^{-1}(1-q(x_2))\\
        &\leq q(x_1)\cdot F_{x_1}^{-1}(1-q(x_1)) = R_{q,\F}(x_1)
    \end{align*}
as desired. The second inequality follows since $q(x_1)$ is the optimal selling probability at point $x_1$ (under no constraints).
\end{proof}

The next lemma is due to the first set of constraints of the mathematical program (\ref{eq:mp-single-agent}).
\begin{lemma}\label{lem:lp-constraint-single-agent}
    Let $(\q,\F)$ be a solution of the  mathematical program (\ref{eq:mp-single-agent}). Then, $R_{q,\F}(x)\leq 1/x$ for $x\in (0,1]$. 
\end{lemma}

\begin{proof}
    Notice that, for any $x\in [0,1]$, by setting the linear price $p\cdot x$ with $p=F_x^{-1}(1-q(x))$, the derivative of the valuation $v$ drawn from distribution $\F$ satisfies $v'(x)\geq p$ with probability at least $q(x)$. This yields an expected demand of $\E_{v\sim \F}\left[D(v,F_x^{-1}(1-q(x)))\right]\geq x\cdot q(x)$. The lemma follows since, by the constraint of the mathematical program (\ref{eq:mp-single-agent}) with $p=F_x^{-1}(1-q(x))$, we get $\E_{v\sim \F}\left[D(v,F_x^{-1}(1-q(x)))\right]\leq 1/F_x^{-1}(1-q(x))$. 
\end{proof}

The next lemma indicates that the function $\int_0^t{R_{q,\F}(x)\ud x}$, which was essentially proved to be concave in Lemma~\ref{lem:mon-single-agent}, has an IMOTV bounded by a polynomial in $\kappa$.
\begin{lemma}\label{lem:imotv-single-agent}
    Let $(\q,\F)$ be a solution of the  mathematical program (\ref{eq:mp-single-agent}). Then, it holds 
    \begin{align}
        R_{q,\F}(0)\leq 4\kappa^2\int_0^1{R_{q,\F}(x)\ud x}.
    \end{align}
\end{lemma}

\begin{proof}
Denote by $V\subseteq \calV_\kappa$ the set of valuation functions that are drawn with non-zero probability and let $S=\left\{v\in V: v'(0)\geq F_0^{-1}(1-q(0))\right\}$ be the set of valuation functions with derivative at least $F_0^{-1}(1-q(0))$ at point $0$. Since $q(x)$ is the optimal selling probability at point $x\in [0,1]$, we get 
    \begin{align}\label{eq:step1-single-agent}
        R_{q,\F}(x) &=q(x)\cdot F_x^{-1}(1-q(x))\geq q(0)\cdot F_x^{-1}(1-q(0)).
    \end{align}
    By the definition of set $S$, we have $\Pr_{v\sim \F}[v\in S]=1-q(0)$. Hence, 
    \begin{align}\label{eq:step2-single-agent}
        F_x^{-1}(1-q(0)) & \geq \min_{v\in S}{v'(x)}.
    \end{align}
    By Equations (\ref{eq:step1-single-agent}) and (\ref{eq:step2-single-agent}), we get
    \begin{align}\label{eq:relate-min-single-agent}
        \int_0^1{\min_{v\in S}{v'(x)}\ud x} &\leq \frac{1}{q(0)}\cdot \int_0^1{R_{q,\F}(x)\ud x}
    \end{align}
    Now, let $t\in [0,1]$ and $\widehat{v}$ a valuation function of $S$ satisfying $\widehat{v}'(t)=\min_{v\in S}{v'(t)}$. We have
    \begin{align}\nonumber
        \int_0^1{\widehat{v}'(x)\ud x} &= \int_0^t{\widehat{v}'(x)\ud x}+\int_t^1{\widehat{v}'(x)\ud x}\\\nonumber
        &\leq t\cdot \widehat{v}'(0)+(1-t)\cdot \widehat{v}'(t)\\\nonumber
        &\leq t\cdot \widehat{v}'(0)+\frac{1-t}{t}\cdot \int_0^t{\min_{v\in S}{v'(x)}\ud x}\\\label{eq:compare-imotv-step3-single-agent}
        &\leq t\cdot \widehat{v}'(0)+ \frac{1-t}{t\cdot q(0)}\cdot \int_0^1{R_{q,\F}(x)\ud x}.
    \end{align}
    The first two inequalities follow due to the concavity of the valuation function $\widehat{v}$ and its definition, while the third one is due to Equation (\ref{eq:relate-min-single-agent}). Now, we divide both sides of Equation (\ref{eq:compare-imotv-step3-single-agent}) by $\widehat{v}'(0)$ and use the facts that the valuation function $\widehat{v}$ has IMOTV at most $\kappa$ and belongs to set $S$ (implying that $\widehat{v}'(0)\geq F_0^{-1}(1-q(0))$). We obtain
    \begin{align}\label{eq:compare-imotv-step4-single-agent}
        \frac{1}{\kappa} &\leq \frac{\int_0^1{\widehat{v}'(x)\ud x}}{\widehat{v}'(0)} \leq t +\frac{1-t}{t}\cdot \frac{\int_0^1{R_{q,\F}(x)\ud x}}{q(0)\cdot \widehat{v}'(0)} \leq t+\frac{1-t}{t}\cdot \frac{\int_0^1{R_{q,\F}(x)\ud x}}{R_{q,\F}(0)}.
    \end{align}
    Rearranging Equation (\ref{eq:compare-imotv-step4-single-agent}), we get
    \begin{align}\label{eq:compare-imotv-step5-single-agent}
        \frac{R_{q,\F}(0)}{\int_0^1{R_{q,\F}(x)\ud x}} &\leq \frac{1-t}{t\cdot \left(\frac{1}{\kappa}-t\right)},
    \end{align}
    and setting $t=1-\sqrt{1-\frac{1}{\kappa}}$, Equation (\ref{eq:compare-imotv-step5-single-agent}) yields
    \begin{align*}
        \frac{R_{q,\F}(0)}{\int_0^1{R_{q,\F}(x)\ud x}} &\leq \left(1-\sqrt{1-\frac{1}{\kappa}}\right)^{-2}\leq 4\kappa^2,
    \end{align*}
    as desired. The last inequality follows since the function $1-\sqrt{1-z}$ is convex in $[0,1]$ and has a derivative equal to $1/2$ at $z=0$. Hence, $1-\sqrt{1-z}\geq z/2$ and the inequality follows using $z=1/\kappa$.
\end{proof}

We are ready to prove our first result using Lemmas~\ref{lem:mon-single-agent}, \ref{lem:lp-constraint-single-agent}, and \ref{lem:imotv-single-agent}; notice that the derivatives of the valuation functions may follow non-regular probability distributions.
\begin{theorem}\label{thm:single-agent}
    The revenue gap of linear posted pricing for a single agent with random concave valuations of IMOTV at most $\kappa$ is at most $\bigO(\ln{\kappa})$.
\end{theorem}

\begin{proof}
Let $t\in [0,1]$. The objective of the mathematical program (\ref{eq:mp-single-agent}) is
\begin{align*}
    \int_0^1{R_{q,\F}(x)\ud x} &= \int_0^t{R_{q,\F}(x)\ud x} + \int_t^1{R_{q,\F}(x)\ud x} \\
    &\leq t\cdot R_{q,\F}(0)+\int_t^1{\frac{\ud x}{x}}\\
    &\leq 4t\kappa^2\cdot \int_0^1{R_{q,\F}(x)\ud x}+\ln{\frac{1}{t}}.
\end{align*}
The first inequality follows by Lemmas~\ref{lem:mon-single-agent} and~\ref{lem:lp-constraint-single-agent} and the second one by Lemma~\ref{lem:imotv-single-agent}. Equivalently, we have 
\begin{align*}
    \int_0^1{R_{q,\F}(x)\ud x} &\leq \frac{\ln{\frac{1}{t}}}{1-4t\kappa^2}.
\end{align*}
Using $t=\frac{1}{8\kappa^2}$, we get that the objective of the mathematical program (\ref{eq:mp-single-agent}) is upper-bounded by $2\ln{(8\kappa^2)} \in \bigO(\ln{\kappa})$.
\end{proof}

It is not hard to extend the result to the case of $n$ agents (by just considering the agent who contributes the most to the optimal revenue); again, the derivatives of the valuation functions may follow non-regular probability distributions.
\begin{cor}
    The revenue gap of linear posted pricing for $n$ agents with random concave valuations with IMOTV at most $\kappa$ is at most $\bigO(n\ln{\kappa})$.
\end{cor}

\noindent We remark that the linear dependency on $n$ is necessary, as indicated by a lower bound of~\cite{AHNPY19} on the revenue gap of anonymous pricing for agents with non-regular valuations.

\subsection{A lower bound}
We now show that our bound in Theorem~\ref{thm:single-agent} is tight, even for valuation function with derivative values following regular probability distributions.

\begin{theorem}
For every $\kappa> 1$, there exists an instance with a single agent with random concave valuation functions of IMOTV at most $\kappa$ and with all valuation derivative values following a regular probability distribution, so that the revenue gap of linear pricing is $\Omega(\ln{\kappa})$. 
\end{theorem}

\begin{proof}
Let $\rho$ be such that $1+\ln{(\kappa \cdot\rho)}=\rho$, i.e., $\rho\geq 1+\ln{\kappa}$. For a parameter $t\geq 0$, let $v_t$ be the valuation function defined as 
\begin{align*}
	v_t(x) = 
	\begin{cases}
	(\kappa+t) \cdot x, & x\in \left[0, \frac{1}{\kappa \cdot \rho}\right]\\
	\frac{1}{\rho}\cdot \left(1+\ln{(\kappa\cdot \rho\cdot x)}\right)+t\cdot x, & x\in \left[\frac{1}{\kappa \cdot \rho},1\right]
	\end{cases}
	\end{align*}
The function $v_t(x)$ has derivative (with respect to $x$) equal to $\kappa+t$ for $x\in \left[0,\frac{1}{\kappa\cdot \rho}\right]$ and to $\frac{1}{\rho\cdot x}+t$ for
$x\in \left[\frac{1}{\kappa\cdot \rho},1\right]$. Thus, it is concave. Since $t\geq 0$, its IMOTV is $\frac{v'_t(0)}{v_t(1)}=\frac{\kappa+t}{\frac{1}{\rho}\cdot \left(1+\ln{(\kappa\cdot \rho}\right)+t} = \frac{\kappa+t}{1+t}\leq \kappa$.

Consider a single agent who draws a random valuation function $v_t$ by selecting $t$ uniformly at random from the interval $[0,1/\rho]$.\footnote{We remark that we can simplify the proof by setting $t=0$. The particular choice of a uniform $t\in[0,1/\rho]$ results in regular valuation derivatives.} Thus, the derivative $v'_t(x)$ is uniformly random in $[\kappa,\kappa+1/\rho]$ if $x\in \left[0,\frac{1}{\kappa\cdot \rho}\right]$ and uniformly random in $\left[\frac{1}{\rho\cdot x},\frac{1}{\rho\cdot x}+\frac{1}{\rho}\right]$ if $x\in \left[\frac{1}{\kappa\cdot \rho},1\right]$, i.e., drawn from regular probability distributions.

Using the pricing function $p(x)=v_0(x)$, the agent will buy the whole item, yielding a revenue of $v_0(1)=1$. Now, consider the linear pricing $p$ and first observe that $D(v_t,p)\leq D(v_{1/\rho},p)$ for every $t\in [0,1/\rho]$. Next, notice that $D(v_{1/\rho},p)=0$ if $p>\kappa+1/\rho$, $D(v_{1/\rho},p)=\frac{1}{\rho\cdot (p-1/\rho)}$ if $2/\rho \leq p \leq \kappa+1/\rho$, and $D(v_{1/\rho},p)=1$ if $p<2/\rho$. Thus, the upper bound of $p\cdot D(v_{1/\rho},p)$ on the expected revenue with the linear pricing $p$ is $0$, at most $\frac{p}{p\cdot (p-1/\rho)}\leq \frac{2}{\rho}$, and less than $2/\rho$, respectively. In this way, we obtain a revenue gap of at least $\rho/2\in \Omega(\ln{\kappa})$.
\end{proof}

\section{The multi-agent case: a black-box reduction}\label{sec:black-box}

We now prove an upper bound on the objective value of the mathematical program (\ref{eq:mp-original}) by relating it to the mathematical program (\ref{eq:mp-indivisible}), and exploiting the revenue gap of \cite{AHNPY19} for anonymous item pricing (Theorem~\ref{thm:rev-Alaei-et-al}). In particular, we prove the following statement for multiple agents.

\begin{theorem}\label{thm:rev-main}
The revenue gap when selling a divisible item using linear pricing to multiple agents with random concave valuations of IMOTV at most $\kappa$ and with derivatives following regular probability distributions is at most $\bigO(\kappa^2)$.
\end{theorem}

\begin{proof}
Our main tool is the following transformation. We use the notation $(\q,\F)$ as abbreviation for the probabilities $q_i(x)$ for every agent $i\in [n]$ and $x\in [0,1]$ and the cumulative density functions $F_{i,x}(t)$ for every agent $i\in [n]$, $x\in [0,1]$, and $t\geq 0$. Given the pair $(\q,\F)$, we define $r_i=\int_0^1{q_i(x)\ud{x}}$ and $H_i(t) = F_{i,0}(2\kappa t)$ for every agent $i\in [n]$ and $t\geq 0$, and use the pair $(\rr,\HH)$ as their abbreviation. 

Bounding the objective value of mathematical program (\ref{eq:mp-original}) has two steps, implemented in Lemmas~\ref{lem:feasible} and \ref{lem:mp2-vs-mp1}, respectively.

\begin{lemma}\label{lem:feasible}
Given a feasible solution $(\q,\F)$ of the mathematical program (\ref{eq:mp-original}), the solution $(\rr,\HH)$ is a feasible solution of the mathematical program (\ref{eq:mp-indivisible}) with $R=2\kappa-1$.
\end{lemma}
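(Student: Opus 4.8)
The plan is to verify each of the four constraints of the mathematical program~(\ref{eq:mp1}) in turn for the transformed solution $(\rr,\HH)$, with $R=2\kappa-1$. The first, third, and fourth constraints are quick. Since $q_i(x)\in[0,1]$, each $r_i=\int_0^1 q_i(x)\,\ud x$ lies in $[0,1]$, and $\sum_{i\in[n]} r_i=\sum_{i\in[n]}\int_0^1 q_i(x)\,\ud x\le 1$ is exactly the first constraint of~(\ref{eq:mp2}). For regularity, note that $H_i(t)=F_{i,0}(2\kappa t)$ is a cdf (it is the composition of the cdf $F_{i,0}$ with the increasing map $t\mapsto 2\kappa t$), and its inverse is $H_i^{-1}(s)=\frac{1}{2\kappa}F_{i,0}^{-1}(s)$, so its revenue-quantile curve $q\,H_i^{-1}(1-q)=\frac{1}{2\kappa}\,q\,F_{i,0}^{-1}(1-q)$ is a positive multiple of the (concave) revenue-quantile curve of the regular cdf $F_{i,0}$, hence concave.

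The substance is the second constraint, namely $p\big(1-\prod_{i\in[n]}H_i(p)\big)\le 2\kappa-1$ for every $p>2\kappa-1$. The key is a curvature estimate translating the event that $v'_i(0)$ is large into a lower bound on the fraction $y^*_i(v_i,p)$ that agent $i$ buys alone at price $p$. Concretely, I would show that for any valuation $v$ of curvature at most $\kappa$ (normalized with $v(0)=0$), if $v'(0)>2\kappa p$ then $y^*(v,p)>\frac{1}{2\kappa-1}$. To see this, set $y=y^*(v,p)$; splitting $v(1)=\int_0^1 v'(z)\,\ud z$ at $y$ and using that $v'$ is non-increasing with $v'(z)\le v'(0)$ on $[0,y]$ and $v'(z)\le v'(y)=p$ on $[y,1]$ gives $v(1)\le y\,v'(0)+(1-y)p$; combining with the curvature bound $v'(0)\le\kappa\,v(1)$ and with $p<v'(0)/(2\kappa)$ yields, after dividing by $v'(0)$, the inequality $1<\kappa y+\frac{1-y}{2}$, i.e.\ $y>\frac{1}{2\kappa-1}$ (the boundary case $y=1$ being trivial since $2\kappa-1\ge 1$).

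Taking expectations over $v_i\sim\F_i$, the curvature estimate gives $1-H_i(p)=\Pr[v'_i(0)>2\kappa p]\le(2\kappa-1)\,\E_{v_i\sim\F_i}[y^*_i(v_i,p)]$, i.e.\ $H_i(p)\ge 1-(2\kappa-1)a_i$ with $a_i:=\E_{v_i\sim\F_i}[y^*_i(v_i,p)]$. Since $p>2\kappa-1\ge 1$, the second constraint of~(\ref{eq:mp2}) applies at price $p$ and gives $\prod_{i\in[n]}(1-a_i)\ge 1-1/p$; in particular each $a_i\le 1/p<\frac{1}{2\kappa-1}$, so every factor $1-(2\kappa-1)a_i$ is positive. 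It then remains to prove the product inequality $\prod_{i\in[n]}\big(1-(2\kappa-1)a_i\big)\ge 1-\frac{2\kappa-1}{p}$ given $\prod_{i\in[n]}(1-a_i)\ge 1-1/p$. I would establish this by a merging argument: writing $m=2\kappa-1\ge 1$, for any $a',a''$ with $(1-a')(1-a'')=1-a$ one has $(1-ma')(1-ma'')-(1-ma)=m(m-1)a'a''\ge 0$, so repeatedly replacing two factors by a single one that preserves the product $\prod_{i\in[n]}(1-a_i)$ can only decrease $\prod_{i\in[n]}(1-ma_i)$; collapsing everything to one factor leaves $1-m a_{\mathrm{tot}}$ with $1-a_{\mathrm{tot}}=\prod_{i\in[n]}(1-a_i)\ge 1-1/p$, whence $\prod_{i\in[n]}(1-ma_i)\ge 1-m a_{\mathrm{tot}}\ge 1-m/p$. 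Chaining the three inequalities gives $\prod_{i\in[n]}H_i(p)\ge 1-\frac{2\kappa-1}{p}$ and hence the second constraint.

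I expect the curvature estimate to be the main obstacle, both because it is where the geometry of concave valuations enters and because it pins down the precise constants (the factor $2\kappa$ in the definition of $\HH$ and the value $R=2\kappa-1$). The product inequality is the other delicate point, since the naive bound $\prod_{i\in[n]}(1-ma_i)\ge 1-m\sum_{i\in[n]} a_i$ is too lossy and one really needs the merging argument that exploits $m\ge 1$. A minor point to make explicit is the normalization $v_i(0)=0$, without which the curvature bound $v'(0)\le\kappa v(1)$ fails to control $y^*$.
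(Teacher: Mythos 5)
Your proof is correct and follows essentially the same route as the paper: the same curvature estimate (if $v'_i(0)\geq 2\kappa p$ then $y^*_i(v_i,p)\geq \frac{1}{2\kappa-1}$, hence $\E_{v_i\sim\F_i}[y^*_i(v_i,p)]\geq \frac{1-H_i(p)}{2\kappa-1}$) combined with a product inequality to pass from $\prod_{i}\left(1-\E[y^*_i(v_i,p)]\right)$ to $\prod_{i}H_i(p)$. The only differences are cosmetic: you establish the product inequality by a pairwise merging induction whereas the paper proves the equivalent bound $1-\prod_i(1-t z_i)\geq t\left(1-\prod_i(1-z_i)\right)$ via concavity in $t$, and you explicitly verify the regularity of $H_i$ and the range constraints, which the paper leaves implicit.
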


\begin{proof}
Clearly, $r_i\in [0,1]$ and $H_i$ is a regular cumulative density function (since $F_{0,i}$ is regular as well). Furthermore, the solution $(\rr,\HH)$ satisfies the first constraint of the mathematical program (\ref{eq:mp-indivisible}) since $(\q,\F)$ satisfies constraint (\ref{eq:constraint-bounded-q}) of the mathematical program (\ref{eq:mp-original}). In the following, we show that $(\rr,\HH)$ satisfies the second constraint of the mathematical program (\ref{eq:mp-indivisible}) as well. We will need three technical lemmas.

\begin{lemma}\label{lem:min}
Let $X_1$, $X_2$, ..., $X_k$ be independent random variables with $X_i\in [0,1]$ for $i\in [k]$ and let $X=\sum_{i\in [k]}{X_i}$. Then, $\E[\min\{1,X\}] \geq 1-\prod_{i\in [k]}{(1-\E[X_i])}$.
\end{lemma} 

\begin{proof} We claim that $\E[\min\{1,X\}]$ is minimized when $X_1$, $X_2$, ..., $X_k$ are Bernoulli random variables. Then, it will be
	\begin{align*}
	\E[\min\{1,X\}] &= 1-\prod_{i\in [k]}{\Pr[X_i=0]}=1-\prod_{i\in [k]}{(1-\E[X_i])},
	\end{align*} 
	and the lemma will follow.
	
	To prove this claim, we show that by replacing the random variable $X_k$ with the Bernoulli random variable $Y_k$ with $\Pr[Y_k=1]=\E[X_k]$ and $\Pr[Y_k=0]=1-\E[X_k]$ the expectation $\E[\min\{1,X\}]$ can only become smaller. The claim will follow by repeating this argument and replacing each random variable $X_i$ with a Bernoulli one that has the same expectation. 
	
	Formally, let $Y=X_1+...+X_{k-1}+Y_k=X'+Y_k$; we will show that $\E[\min\{1,X\}]\geq \E[\min\{1,Y\}]$. Denoting by $G$ the cdf of the random variable $Y_k$, we have that, conditioned on $X'=w$, the expected contribution of $X_k$ to $\min\{1,X\}$ is 
	\begin{align*}
	\E[\min\{1,X\}-X'|X'=w] &= \int_0^{1-w}{(1-G(z))\ud z} \geq (1-w)\int_0^1{(1-G(z))\ud z}\\
	&=(1-w)\E[X_k] =(1-w)\Pr[Y_k=1]=\E[\min\{1,Y\}-X'|X'=w].
	\end{align*}
	The inequality follows since $1-G(z)$ is non-increasing in $z$ and, subsequently, $\int_0^t{\left(1-G(z)\right)\ud z}$ is concave in $t$ and has no point below the line $t\int_0^t{\left(1-G(z)\right)\ud z}$ for $t\in [0,1]$. Denoting the pdf of the random variable $X'$ by $f$, we have
	\begin{align*}
	\E[\min\{1,X\}] &= \int_0^1{f(w)(w+\E[\min\{1,X\}-X'|X'=w])\ud w}\\
	&\geq \int_0^1{f(w)(w+\E[\min\{1,Y\}-X'|X'=w])\ud w}\\
	&= \E[\min\{1,Y\}],
	\end{align*}
	as desired.
\end{proof}

\begin{lemma}\label{lem:second-constraint}
For every agent $i\in [n]$ and any price $p$, it holds that $\E_{v_i\sim \F_i}[D(v_i,p)] \geq \frac{1-H_i(p)}{2\kappa-1}$.
\end{lemma}

\begin{proof}
Assume that $v'_i(0)\geq 2\kappa p$; hence, $v_i(1)\geq 2p$. By the definition of $D(v_i,p)$, we have that $v'_i(z)\leq p$ for $z\geq D(v_i,p)$. Hence, 
\begin{align}\label{eq:y-star-1}
v_i(D(v_i,p)) &\geq v_i(1)-v'_i(D(v_i,p))\cdot (1-D(v_i,p)) \geq v_i(1)-p\cdot (1-D(v_i,p))
\end{align}
Furthermore, by our assumption on the IMOTV bound of the valuation functions, we have $v'_i(z)\leq \kappa \cdot v_i(1)$ for $z\leq D(v_i,p)$. Hence, 
\begin{align}\label{eq:y-star-2}
v_i(D(v_i,p)) &\leq \kappa \cdot v_i(1) \cdot D(v_i,p).
\end{align}
Now, (\ref{eq:y-star-1}) and (\ref{eq:y-star-2}) yield
\begin{align*}
D(v_i,p) & \geq \frac{v_i(1)-p}{\kappa \cdot v_i(1)-p} \geq \frac{1}{2\kappa-1}.
\end{align*}
Thus,
\begin{align*}
\E_{v_i\sim \F_i}[D(v_i,p)] &\geq \E_{v_i\sim \F_i}[D(v_i,p) \one\{v'_i(0)\geq 2\kappa p\}] \geq \frac{1-F_{i,0}(2\kappa p)}{2\kappa-1} = \frac{1-H_i(p)}{2\kappa-1}. \qedhere
\end{align*}
\end{proof}

\begin{lemma}\label{lem:prod}
Let $k$ be an integer and $0\leq z_1, z_2, ..., z_k\leq 1$. Then, for every $t\in (0,1)$, it holds that
\begin{align*}
1-\prod_{i\in [k]}{(1-t\cdot z_i)} &\geq t \left(1-\prod_{i\in [k]}{(1-z_i)}\right)
\end{align*}
\end{lemma}

\begin{proof}
	It suffices to show that the LHS is concave as a function of $t$; then it is at least as high as the line that connects points $(0,0)$ and $(1,1-\prod_{i\in [k]}{(1-z_i)})$, i.e., the RHS of the above inequality.  Indeed, the first derivative of the LHS is equal to 
	\begin{align*}
	\prod_{i\in [k]}{(1-t\cdot z_i)} \cdot \sum_{i\in [k]}{\frac{z_i}{1-t\cdot z_i}}
	\end{align*} 
	and its second derivative is equal to
	\begin{align*}
	-\prod_{i\in [k]}{(1-t\cdot z_i)} \left(\left(\sum_{i\in [k]}{\frac{z_i}{1-t\cdot z_i}}\right)^2-\sum_{i\in [k]}{\left(\frac{z_i}{1-t\cdot z_i}\right)^2}\right) &<0,
	\end{align*}
	as desired.
\end{proof}

Now, using the constraint (\ref{eq:constraint-min}) of the mathematical program (\ref{eq:mp-original}) for $p>2\kappa-1$ and applying Lemma~\ref{lem:min} (with $k=n$ and $X_i=D(v_i,p)$ for $i\in [n]$), Lemma~\ref{lem:second-constraint} and Lemma~\ref{lem:prod}, we have
\begin{align*}
1&\geq p\cdot \E_{\vv\sim \F}{\left[\min\left\{1,\sum_{i\in [n]}{D(v_i,p)}\right\}\right]} \geq p \cdot \left(1-\prod_{i\in [n]}{\left(1-\E_{v_i\sim \F_i}[D(v_i,p)]\right)}\right)\\
&\geq p \cdot \left(1-\prod_{i\in [n]}{\left(1-\frac{1-H_i(p)}{2\kappa-1}\right)}\right) \geq \frac{p}{2\kappa-1} \cdot \left(1-\prod_{i\in [n]}{H_i(p)}\right),
\end{align*}
i.e., equivalently, 
\begin{align*}
p\left(1-\prod_{i\in [n]}{H_i(p)}\right) &\leq R, \forall p>R,
\end{align*}
with $R=2\kappa-1$, as the second constraint of the mathematical program (\ref{eq:mp-indivisible}) requires.
\end{proof}

The second step of our proof is to relate the objective values of the two mathematical programs (\ref{eq:mp-original}) and (\ref{eq:mp-indivisible}).

\begin{lemma}\label{lem:mp2-vs-mp1}
The objective value of the mathematical program (\ref{eq:mp-original}) at solution $(\q,\F)$ is at most $2\kappa$ times the objective value of the mathematical program (\ref{eq:mp-indivisible}) with $R=2\kappa-1$ at solution $(\rr,\HH)$.
\end{lemma}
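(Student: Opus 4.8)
The plan is to prove the stated inequality agent by agent, after first making the effect of the transformation on the objective of (\ref{eq:mp1}) explicit. Since $H_i(t)=F_{i,0}(2\kappa t)$, inverting gives $H_i^{-1}(s)=\frac{1}{2\kappa}F_{i,0}^{-1}(s)$, so the objective value of (\ref{eq:mp1}) at $(\rr,\HH)$ equals $\frac{1}{2\kappa}\sum_{i\in[n]}{r_i F_{i,0}^{-1}(1-r_i)}$. Hence the claim reduces to showing that the objective of (\ref{eq:mp2}) is at most $\sum_{i\in[n]}{r_i F_{i,0}^{-1}(1-r_i)}$. As both sides decompose as sums over agents, it suffices to establish, for each agent $i$,
\[
\int_0^1{q_i(x)\, F_{i,x}^{-1}(1-q_i(x))\,\ud x} \;\le\; r_i\, F_{i,0}^{-1}(1-r_i),
\]
where $r_i=\int_0^1{q_i(x)\,\ud x}$ as in the transformation.

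First I would remove the dependence of the integrand on $x$ inside the inverse cdf. Because each $v_i$ is concave, its derivative $v'_i$ is non-increasing, so $v'_i(x)\le v'_i(0)$ holds pointwise for every realization of $v_i$. This is a first-order stochastic dominance relation between the random variables $v'_i(x)$ and $v'_i(0)$, which translates to $F_{i,x}(t)\ge F_{i,0}(t)$ for all $t$ and hence to $F_{i,x}^{-1}(s)\le F_{i,0}^{-1}(s)$ for every quantile $s$. Applying this with $s=1-q_i(x)$ and using $q_i(x)\ge 0$, I can bound the left-hand side above by $\int_0^1{q_i(x)\, F_{i,0}^{-1}(1-q_i(x))\,\ud x}$.

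It remains to aggregate this integral into the single term $r_i\, F_{i,0}^{-1}(1-r_i)$. Here I would invoke the regularity assumption: the revenue-quantile curve $R_i(q):=q\, F_{i,0}^{-1}(1-q)$ is concave in $q$. Viewing integration over $x\in[0,1]$ as averaging with respect to a probability measure (the interval has unit length), Jensen's inequality for the concave function $R_i$ yields $\int_0^1{R_i(q_i(x))\,\ud x}\le R_i\!\left(\int_0^1{q_i(x)\,\ud x}\right)=R_i(r_i)=r_i\, F_{i,0}^{-1}(1-r_i)$, which is exactly the per-agent bound. Summing over $i$ and recalling the factor $2\kappa$ from the first paragraph completes the proof.

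The step I expect to require the most care is the stochastic-dominance reduction from $F_{i,x}^{-1}$ to $F_{i,0}^{-1}$: one must argue cleanly that pointwise monotonicity of the concave valuation's derivative yields domination of the corresponding quantile functions, and check that multiplying by the nonnegative weight $q_i(x)$ preserves the inequality. The concluding Jensen step is conceptually the crux but technically routine once regularity supplies concavity of the revenue-quantile curve.
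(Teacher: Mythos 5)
Your proposal is correct and follows essentially the same route as the paper's proof: first replace $F_{i,x}^{-1}$ by $F_{i,0}^{-1}$ using the fact that concavity of the valuations makes $v'_i(x)$ stochastically dominated by $v'_i(0)$, then apply Jensen's inequality to the concave revenue-quantile curve $q\mapsto q\,F_{i,0}^{-1}(1-q)$ guaranteed by regularity, and finally convert $F_{i,0}^{-1}$ to $2\kappa H_i^{-1}$. Your explicit justification of the first inequality via first-order stochastic dominance is a slightly more detailed rendering of the paper's one-line remark, but the argument is identical in substance.
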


\begin{proof}
	Notice that, by the definition of $\HH$, we have $F^{-1}_{i,0}(t)=2\kappa H^{-1}_i(t)$. Hence, 
	\begin{align*}
	\int_0^1{q_i(x) F^{-1}_{i,x}(1-q_i(x))\ud{x}} &\leq \int_0^1{q_i(x) F^{-1}_{i,0}(1-q_i(x))\ud{x}}\\
	&\leq \int_0^1{q_i(x)\ud{x}} \cdot F^{-1}_{i,0}\left(1-\int_0^1{q_i(x)\ud{x}}\right)\\
	&= 2\kappa r_i H^{-1}_i(1-r_i).
	\end{align*}
	The first inequality follows by the concavity of valuations which implies $F^{-1}_{i,x}(t)$ is non-increasing in $x$. The second inequality follows by applying Jensen inequality; recall that, by the regularity of $v'_i(0)$, the function $q \cdot F^{-1}_{i,0}(1-q)$ is concave in $q$.
\end{proof}

Theorem~\ref{thm:rev-main} now follows by Theorem \ref{thm:rev-Alaei-et-al} and Lemmas~\ref{lem:feasible} and~\ref{lem:mp2-vs-mp1}. Indeed, these three statements imply a revenue gap of at most $2\kappa(2\kappa-1)e$.
\end{proof}

\section{The multi-agent case: a direct approach}\label{sec:direct}
We now present a direct approach to bound the objective value of the mathematical program (\ref{eq:mp-original}), which will give us a better (logarithmic) dependency on $\kappa$ at the expense of a logarithmic dependency on the number of agents. In particular, we prove the following statement for multiple agents.

\begin{theorem}\label{thm:rev-main-direct}
    The revenue gap when selling a divisible item using linear pricing to $n$ agents with random concave valuations of IMOTV at most $\kappa$ and with derivatives that follow regular probability distributions is at most $\bigO(\ln{\kappa}+\ln{n})$.
\end{theorem}

\begin{proof}
The proof of Theorem~\ref{thm:rev-main-direct} is decomposed into two main parts. First, we show that the mathematical program (\ref{eq:mp-original}) has a feasible solution that satisfies certain properties. Then, we use this solution and its properties to bound the maximum objective value.

\begin{lemma}\label{lem:relaxed-constraint}
    Any feasible solution $(\q,\F)$ of the mathematical program (\ref{eq:mp-original}) satisfies 
    \begin{align*}
        \sum_{i\in [n]}{\int_0^1{(1-F_{i,x}(p))\ud x}}\leq \frac{1}{p-1}, \forall p>1.
    \end{align*}
\end{lemma}

\begin{proof}
We will prove the statement using the constraint (\ref{eq:constraint-min}) of the mathematical program (\ref{eq:mp-original}) and the next technical lemma.

\begin{lemma}\label{lem:min-new}
Let $X_1$, $X_2$, ..., $X_k$ be independent random variables with $X_i\in [0,1]$ for $i\in [k]$ and let $X=\sum_{i\in [k]}{X_i}$. Then, $\E[X] \leq \frac{\E[\min\{1,X\}]}{1-\E[\min\{1,X\}]}$.
\end{lemma} 

\begin{proof} 
Using Lemma~\ref{lem:min} and applying the geometric-arithmetic mean inequality, we get
\begin{align*}
    \E[\min\{1,X\}] &\geq 1-\prod_{i\in [k]}{\left(1-\E[X_i]\right)}\geq 1-\left(\frac{1}{k}\cdot \sum_{i\in [k]}{(1-\E[X_i])}\right)^{k}=1-\left(1-\frac{1}{k}\cdot \E[X]\right)^{k},
\end{align*} 
which implies that
\begin{align*}
    \E[X] &\leq k\cdot \left(1-(1-\E[\min\{1,X\}])^{1/k}\right).
\end{align*}
By the concavity of function $z^{1/k}$, we get $z_1^{1/k}-z_2^{1/k}\leq \frac{1}{k}\cdot (z_1-z_2)\cdot z_1^{1/k-1}$ for $z_2<z_1$. Applying this inequality to the RHS above with $z_1=1$ and $z_2=1-\E[\min\{1,X\}]$, we get
\begin{align*}
    \E[X] &\leq \E[\min\{1,X\}]\cdot (1-\left(1-\E[\min\{1,X\}])^{1/k-1}\right)\\
    &\leq \E[\min\{1,X\}]\cdot (1-\left(1-\E[\min\{1,X\}])^{-1}\right)\\
    &=\frac{\E[\min\{1,X\}]}{1-\E[\min\{1,X\}]},
\end{align*}
as desired.
\end{proof}

Lemma~\ref{lem:min-new} clearly implies that $\E[X]\leq \frac{1}{p-1}$ when $\E[\min\{1,X\}]\leq 1/p$. Lemma~\ref{lem:relaxed-constraint} now follows by applying Lemma~\ref{lem:min-new} with $k=n$ and the random variable $X_i=D(v_i,p)$ for $i\in [n]$, where $v_i$ is drawn from the probability distribution $\F_i$. Notice that $\E_{v_i\in \F_i}[D(v_i,p)]=\int_0^1{(1-F_{i,x}(p))\ud x}$ and, hence, $\E[X]=\sum_{i\in [n]}{X_i}=\sum_{i\in [n]}{\int_0^1{(1-F_{i,x}(p))\ud x}}$, while $\E[\min\{1,X\}]\leq 1/p$ is due to constraint (\ref{eq:constraint-min}) of the mathematical program (\ref{eq:mp-original}).
\end{proof}

Our next lemma shows that the mathematical program (\ref{eq:mp-original}) has an approximately optimal feasible solution in which no selling probability is very close to $0$.

\begin{lemma}\label{lem:pumped-up-q-feasible}
The mathematical program (\ref{eq:mp-original}) has a feasible solution $(\widetilde{\q},\F)$ which furthermore satisfies $\widetilde{q}_i(x)\geq \frac{1}{2n}$ for every $i\in [n]$ and $x\in [0,1]$ and approximates its maximum objective value within a factor of $2$.
\end{lemma}

\begin{proof}
Consider the optimal feasible solution $(\q,\F)$ of the mathematical program (\ref{eq:mp-original}) and define $\widetilde{q}_i(x)=\frac{1}{2n}+\frac{q_i(x)}{2}$ for $i\in [n]$ and $x\in [0,1]$. The only constraint of the mathematical program (\ref{eq:mp-original}) that can be affected by the change from $\q$ to $\widetilde{\q}$ is (\ref{eq:constraint-bounded-q}). Due to the feasibility of solution $(\q,\F)$, we get
    \begin{align*}
        \sum_{i\in [n]}{\int_0^1{\widetilde{q}_i(x)\ud x}} &= \sum_{i\in [n]}{\int_0^1{\left(\frac{1}{2n}+\frac{q_i(x)}{2}\right)\ud x}} =\frac{1}{2}+\frac{1}{2}\cdot \sum_{i\in [n]}{\int_0^1{q_i(x)\ud x}}\leq 1,
    \end{align*}
    implying that solution $(\widetilde{\q},\F)$ is feasible as well.

 To prove that the objective value $(\widetilde{\q},\F)$ approximates $(\q,\F)$ within $2$, it suffices to prove that $2\cdot R_{\widetilde{q}_i,\F_i}(x)\geq R_{q_i,\F_i}(x)$ for every $i\in [n]$ and $x\in [0,1]$. Let $i\in [n]$ and $x\in [0,1]$. The claim is obvious if $q_i(x)=1/n$ (since $\widetilde{q}_i(x)=1/n$ as well). We distinguish between two further cases.

 \paragraph{Case 1: $q_i(x)<1/n$.} Notice that $\widetilde{q}_i(x)>q_i(x)$ in this case. Recall that due to the regularity of probability distribution $F_{i,x}$, the revenue-quantile curve $z\cdot F_{i,x}^{-1}(1-z)$ is concave. Thus, it takes values above the line $q(z)=\frac{q_i(x)}{1-q_i(x)}\cdot F_{i,x}^{-1}(1-q_i(x))\cdot (1-z)$ connecting points $(q_i(x),q_i(x)\cdot F_{i,x}^{-1}(1-q_i(x)))$ and $(1,0)$. Hence,
    \begin{align*}
        2\cdot R_{\widetilde{q}_i,\F_i}(x) &=2\cdot \left(\frac{1}{2n}+\frac{q_i(x)}{2}\right)\cdot F_{i,x}^{-1}\left(1-\frac{1}{2n}-\frac{q_i(x)}{2}\right)\\
        &\geq 2\cdot \frac{q_i(x)}{1-q_i(x)}\cdot F_{i,x}^{-1}(1-q_i(x))\cdot \left(1-\frac{1}{2n}-\frac{q_i(x)}{2}\right)\\
        &\geq q_i(x)\cdot F_{i,x}^{-1}(1-q_i(x))=R_{q_i,\F_i}(x),
    \end{align*}
as desired. 

\paragraph{Case 2: $q_i(x)>1/n$.} Notice that $\widetilde{q}_i(x)<q_i(x)$ in this case. Again, due to the regularity of probability distribution $F_{i,x}$, the revenue-quantile curve $z\cdot F_{i,x}^{-1}(1-z)$ is concave and takes values above the line $g(z)=F_{i,x}^{-1}(1-q_i(x)))\cdot z$ connecting points $(0,0)$ and $(q_i(x),q_i(x)\cdot F_{i,x}^{-1}(1-q_i(x)))$. Hence,
   \begin{align*}
        2\cdot R_{\widetilde{q}_i,\F_i}(x) &=2\cdot \left(\frac{1}{2n}+\frac{q_i(x)}{2}\right)\cdot F_{i,x}^{-1}\left(1-\frac{1}{2n}-\frac{q_i(x)}{2}\right)\\
        &\geq 2\cdot F_{i,x}^{-1}(1-q_i(x))\cdot \left(\frac{1}{2n}+\frac{q_i(x)}{2}\right)\\
        &\geq q_i(x)\cdot F_{i,x}^{-1}(1-q_i(x))=R_{q_i,\F_i}(x),
    \end{align*}
    completing the proof.
\end{proof}

The proof of the next lemma follows a similar structure to the proof of Lemma~\ref{lem:imotv-single-agent} in Section~\ref{sec:single-agent}. However, due to the constraint (\ref{eq:constraint-bounded-q}) on the selling probabilities, the simple argument about their optimality that we used in the proof of Lemma~\ref{lem:imotv-single-agent} cannot be used here; the dependency on the number of agents in our analysis is mainly introduced due to this subtle issue.

\begin{lemma}\label{lem:pumped-up-q-imotv-bound}
Let $(\widetilde{\q},\F)$ be any feasible solution of the  mathematical program (\ref{eq:mp-original}) that satisfies the additional constraint $\widetilde{q}_i(x)\geq \frac{1}{2n}$ for every $i\in [n]$ and $x\in [0,1]$. Then, for every $i\in [n]$, it holds 
    \begin{align}
        R_{\widetilde{q}_i,\F_i}(0)\leq 8n\kappa^2\int_0^1{R_{\widetilde{q}_i,\F_i}(x)\ud x}.
    \end{align}
\end{lemma}

\begin{proof}
Consider an agent $i\in [n]$ and define $S_i=\left\{v_i\in \calV_\kappa: v'_i(0)\geq F_{i,0}^{-1}(1-\widetilde{q}_i(0))\right\}$ to be the set of valuation functions drawn from $\F_i$ with derivative at least $F_{i,0}^{-1}(1-\widetilde{q}_i(0))$ at point $0$. By the definition of set $S_i$, we have $\Pr_{v\sim \F_i}[v\in S_i]=1-\widetilde{q}_i(0)$. Hence, 
    \begin{align}\label{eq:step1-multi-agent}
        \min_{v\in S_i}{v'(x)}& \leq F_x^{-1}(1-\widetilde{q}_i(0))\leq F_x^{-1}(1-\widetilde{q}_i(x)) \leq \frac{2n}{\widetilde{q}_i(0)} \cdot \widetilde{q}_i(x)\cdot F_x^{-1}(1-\widetilde{q}_i(x)).
    \end{align}
    The second inequality follows due to the constraint for monotone non-increasing selling probability in mathematical program (\ref{eq:mp-original}), and the third one since $\widetilde{q}_i(x)\geq \frac{1}{2n}$ and $\widetilde{q}_i(0)\leq 1$. By Equation (\ref{eq:step1-multi-agent}), we get
    \begin{align}\label{eq:relate-min-multi-agent}
        \int_0^1{\min_{v\in S_i}{v'(x)}\ud x} &\leq \frac{2n}{\widetilde{q}(0)}\cdot \int_0^1{R_{\widetilde{q}_i,\F_i}(x)\ud x}.
    \end{align}
    Now, let $t\in [0,1]$ and $\widehat{v}$ be a valuation function of $S_i$ satisfying $\widehat{v}'(t)=\min_{v\in S_i}{v'(t)}$. We have
    \begin{align}\nonumber
        \int_0^1{\widehat{v}'(x)\ud x} &= \int_0^t{\widehat{v}'(x)\ud x}+\int_t^1{\widehat{v}'(x)\ud x}\\\nonumber
        &\leq t\cdot \widehat{v}'(0)+(1-t)\cdot \widehat{v}'(t)\\\nonumber
        &\leq t\cdot \widehat{v}'(0)+\frac{1-t}{t}\cdot \int_0^t{\min_{v\in S}{v'(x)}\ud x}\\\label{eq:compare-imotv-step2-multi-agent}
        &\leq t\cdot \widehat{v}'(0)+ \frac{2n\cdot (1-t)}{t\cdot \widetilde{q}_i(0)}\cdot \int_0^1{R_{\widetilde{q}_i,\F_i}(x)\ud x}.
    \end{align}
    The first two inequalities are due to the concavity of the valuation function $\widehat{v}$ (and its definition), while the third one is due to Equation (\ref{eq:relate-min-multi-agent}). Now, we divide both sides of Equation (\ref{eq:compare-imotv-step2-multi-agent}) by $\widehat{v}'(0)$ and use the facts that the valuation function $\widehat{v}$ has IMOTV at most $\kappa$ and belongs to set $S_i$ (and, therefore, $\widehat{v}'(0)\geq F_{i,0}^{-1}(1-\widetilde{q}_i(0)$). We obtain
    \begin{align}\label{eq:compare-imotv-step3-multi-agent}
        \frac{1}{\kappa} &\leq \frac{\int_0^1{\widehat{v}'(x)\ud x}}{\widehat{v}'(0)} \leq t +\frac{1-t}{t}\cdot \frac{\int_0^1{R_{q,\F}(x)\ud x}}{\widetilde{q}(0)\cdot \widehat{v}'(0)} \leq t+\frac{2n\cdot (1-t)}{t}\cdot \frac{\int_0^1{R_{\widetilde{q}_i,\F_i}(x)\ud x}}{R_{\widetilde{q}_i,\F_i}(0)}.
    \end{align}
    Rearranging Equation (\ref{eq:compare-imotv-step3-multi-agent}), we get
    \begin{align}\label{eq:compare-imotv-step4-multi-agent}
        \frac{R_{\widetilde{q}_i,\F_i}(0)}{\int_0^1{R_{\widetilde{q}_i,\F_i}(x)\ud x}} &\leq \frac{2n\cdot(1-t)}{t\cdot \left(\frac{1}{\kappa}-t\right)}.
    \end{align}
    Similarly to the last step in the proof of Lemma~\ref{lem:imotv-single-agent}, setting $t=1-\sqrt{1-\frac{1}{\kappa}}$, Equation (\ref{eq:compare-imotv-step4-multi-agent}) yields
    \begin{align*}
        \frac{R_{\widetilde{q}_i,\F_i}(0)}{\int_0^1{R_{\widetilde{q}_i,\F_i}(x)\ud x}} &\leq 2n\cdot \left(1-\sqrt{1-\frac{1}{\kappa}}\right)^{-2}\leq 8n\kappa^2,
    \end{align*}
    as desired. 
\end{proof}

The next two lemmas refer to optimal feasible solutions of the mathematical program (\ref{eq:mp-original}), among those satisfying the additional constraint.
\begin{lemma}\label{lem:monotone-R}
    Let $(\widetilde{\q},\F)$ be an optimal feasible solution to mathematical program (\ref{eq:mp-original}), under the additional constraint $\widetilde{q}_i(x)\geq \frac{1}{2n}$ for $i\in [n]$ and $x\in [0,1]$. Then, $R_{\widetilde{q}_i,\F_i}(x)$ is monotone non-decreasing in $x$ for every $i\in [n]$.
\end{lemma}

\begin{proof}
For the sake of contradiction, let $0\leq x_1<x_2\leq 1$ and assume that $R_{\widetilde{q}_i,\F_i}(x)<R_{\widetilde{q}_i,\F_i}(x_2)$ for every $x\in [x_1,x_2)$. Then, notice that due to the concavity of the valuation functions, we have
\begin{align*}
    \widetilde{q}_i(x_2)\cdot F_{i,x}^{-1}(1-\widetilde{q}_i(x_2)) \geq \widetilde{q}_i(x_2)\cdot F_{i,x_2}^{-1}(1-\widetilde{q}_i(x_2)) > \widetilde{q}_i(x_1)\cdot F_{i,x_1}^{-1}(1-\widetilde{q}_i(x_1)).
\end{align*}
This contradicts the optimality of the solution $(\widetilde{\q},\F)$ since decreasing $\widetilde{q}_i(x)$ to $\widetilde{q}_i(x_2)$ for $x\in [x_1,x_2)$ would strictly increase the objective value of the mathematical program (\ref{eq:mp-original}), without violating constraint (\ref{eq:constraint-bounded-q}), the constraint for monotone non-increasing selling probability, as well as the additional constraint (recall that $\widetilde{q}_i(x_2)\geq \frac{1}{2n}$).
\end{proof}

\begin{lemma}\label{lem:bound-on-R}
Let $(\widetilde{\q},\F)$ be an optimal feasible solution to mathematical program (\ref{eq:mp-original}), under the additional constraint $\widetilde{q}_i(x)\geq \frac{1}{2n}$ for $i\in [n]$ and $x\in [0,1]$. Then, $R_{\widetilde{q}_i,\F_i}(x)\leq 256n^2\kappa^4$ for every $i\in [n]$ and $x\in [0,1]$.
\end{lemma}

\begin{proof}
   Let $i\in [n]$ and $z\in [0,1]$; for the linear price $p=F_{i,z}^{-1}(1-\widetilde{q}_i(z))$, the constraint (\ref{eq:constraint-min}) of the mathematical program (\ref{eq:mp-original}) yields
    \begin{align*}
        \frac{1}{F_{i,z}^{-1}(1-\widetilde{q}_i(z))} &\geq \E_{\vv\sim\F}\left[\min\left\{1,\sum_{i\in [n]}{D(v_i,F_{i,z}^{-1}(1-\widetilde{q}_i(z)))}\right\}\right]\\
        &\geq \E_{v_i\sim\F_i}\left[D(v_i,F_{i,z}^{-1}(1-\widetilde{q}_i(z)))\right]\geq z\cdot \widetilde{q}_i(z).
    \end{align*}
    The last inequality is due to the fact that, by definition, the demand $D(v_i,F_{i,z}^{-1}(1-\widetilde{q}_i(z)))$ is at least $z$ with probability $\widetilde{q}_i(z)$. Hence,
    \begin{align}\label{eq:bound-for-integration}
        R_{\widetilde{q}_i,\F_i}(z) &= \widetilde{q}_i(z)\cdot F_{i,z}^{-1}(1-\widetilde{q}_i(z))\leq \frac{1}{z},
    \end{align}
    for every $z\in [0,1]$.

    Now, let $t\in [0,1]$. Using Lemma~\ref{lem:pumped-up-q-imotv-bound}, Equation (\ref{eq:bound-for-integration}), and the inequality $R_{\q_i,\F_i}(0)\geq R_{\widetilde{q}_i,\F_i}(x)$ from Lemma~\ref{lem:monotone-R}, we have
    \begin{align*}
        \frac{R_{\widetilde{q}_i,\F_i}(0)}{8n\kappa^2} & \leq \int_0^1{R_{\widetilde{q}_i,\F_i}(x)\ud x} = \int_0^t{R_{\widetilde{q}_i,\F_i}(x)\ud x}+\int_t^1{R_{\widetilde{q}_i,\F_i}(x)\ud x}\\
        &\leq t\cdot R_{\widetilde{q}_i,\F_i}(0) + \int_t^1{\frac{\ud x}{x}}= t\cdot R_{\widetilde{q}_i,\F_i}(0)+\ln{\frac{1}{t}}.
    \end{align*}
    Setting $t=\left(R_{\widetilde{q}_i,\F_i}(0)\right)^{-1}$ and using the inequality $1+\ln{z}\leq 2\sqrt{z}$, we get
    \begin{align*}
        \frac{R_{\q_i,\F_i}(0)}{8n\kappa^2} & \leq 1+\ln{R_{\widetilde{q}_i,\F_i}(0)}\leq 2\cdot \sqrt{R_{\widetilde{q}_i,\F_i}(0)}
    \end{align*}
    and, equivalently, $R_{\widetilde{q}_i,\F_i}(0)\leq 256n^2\kappa^4$. The lemma now follows due to Lemma~\ref{lem:monotone-R}.
\end{proof}

To summarize the proof of Theorem~\ref{thm:rev-main-direct} so far, Lemmas~\ref{lem:relaxed-constraint},~\ref{lem:pumped-up-q-feasible}, and~\ref{lem:bound-on-R} imply that the mathematical program (\ref{eq:mp-original}) has a feasible solution $(\q,\F)$ that approximates its maximum objective value within a factor of $2$ and satisfies the next two sets of inequalities:
\begin{align}\label{eq:relaxed-linear-pricing-constraint}
&\sum_{i\in [n]}{\int_0^1{(1-F_{i,x}(p))\ud x}}\leq \frac{1}{p-1}, \quad\forall p>1\\\label{eq:bounded-R-restated}
&R_{q_i,\F_i}(x)\leq 256n^2\kappa^4, \quad\forall i\in [n], x\in 0,1]\\\nonumber
\end{align}
Using this solution, we define the  subsets $A$, $B$, and $C_t$ for $t\geq 2$ of set $[n]\times [0,1]$:
\begin{align*}
    & A=\left\{(i,x)\in [n]\times [0,1] \mbox{ s.t.~} F_{i,x}^{-1}(1-q_i(x))<2\right\}\\
    & B=\left\{(i,x)\in [n]\times [0,1] \mbox{ s.t.~} F_{i,x}^{-1}(1-q_i(x))\geq 2 \mbox{ and } q_i(x)\cdot F_{i,x}^{-1}(1-q_i(x))< 2\cdot (1-q_i(x)) \right\}\\
    & C_t=\left\{(i,x)\in [n]\times [0,1] \mbox{ s.t.~} F_{i,x}^{-1}(1-q_i(x))\geq t \mbox{ and } q_i(x)\cdot F_{i,x}^{-1}(1-q_i(x))\geq t \cdot (1-q_i(x)) \right\}
\end{align*}
Furthermore, we define
\begin{align*}
    \I\left(g_i(x);(i,x)\in X\right) = 2\cdot \sum_{i\in [n]}{\int_0^1{g_i(x)\one\{(i,x)\in X\}\ud x}}
\end{align*}
for functions $g_i: [0,1]\rightarrow \R_{\geq 0}$ for $i\in [n]$ and $X\subseteq [n]\times [0,1]$.
Thus, the maximum objective value of the mathematical program (\ref{eq:mp-original}) is upper-bounded by
\begin{align}\nonumber
    &2\cdot \sum_{i\in [n]}{\int_0^1{q_i(x)\cdot F_{i,x}^{-1}(1-q_i(x))\ud x}} = \I(R_{q_i,\F_i}(x); (i,x)\in [m]\times [0,1])\\\nonumber
    &\quad\quad= \I(R_{q_i,\F_i}(x); (i,x)\in A)+\I(R_{q_i,\F_i}(x); (i,x)\in B)\\\label{eq:summands-I}
    &\quad\quad\quad\quad +\sum_{j=1}^{\left\lfloor\log{( 256n^2\kappa^4)}\right\rfloor}{\I(R_{q_i,\F_i}(x); (i,x)\in C_{2^j}\setminus C_{2^{j+1}})}.
\end{align}
The last sum runs up to $\lfloor\log{\left(256n^2\kappa^4\right)}\rfloor$ since, due to the inequalities (\ref{eq:bounded-R-restated}), the set $C_t$ is empty for $t>256n^2\kappa^4$. We will complete the proof of Theorem~\ref{thm:rev-main-direct} by upper-bounding each summand in the RHS of Equation (\ref{eq:summands-I}) by a constant. In doing so, we will be assisted by a last technical lemma.

\begin{lemma}\label{lem:triangular}
Let $F$ be a regular probability distribution, $z\in [0,1]$, and $\ell>0$ such that $\ell\leq F^{-1}(1-z)$. Then,
\begin{align*}
    \frac{z\cdot F^{-1}(1-z)}{(1-z)\cdot \ell+z\cdot  F^{-1}(1-z)} \leq 1-F(\ell).
\end{align*}
\end{lemma}

\begin{proof}
We use lower-left and upper-right points to define axis-parallel rectangles. Notice that the assumption $\ell\leq F^{-1}(1-z)$ implies that $z\leq 1-F(\ell)$. We distinguish between two cases. 

\paragraph{Case 1: $\ell\cdot (1-F(\ell)) \geq z\cdot F^{-1}(1-z)$.} Then, the area of the rectangle defined by the points $(0,0)$ and $(1,z\cdot F^{-1}(1-z))$ is upper-bounded by the total area in the two rectangles defined by the points $(0,0)$ and $(1-F(\ell),z\cdot F^{-1}(1-z))$ and by the points $(z,0)$ and $(1,\ell\cdot (1-F(\ell)))$, respectively. Thus,
\begin{align*}
    z\cdot F^{-1}(1-z) &\leq (1-F(\ell))\cdot z\cdot F^{-1}(1-z)+(1-z)\cdot \ell\cdot (1-F(\ell)),
\end{align*}
which is equivalent to the claimed inequality.

\begin{figure}[h]

    \centering
    \includegraphics[width=0.6\textwidth]{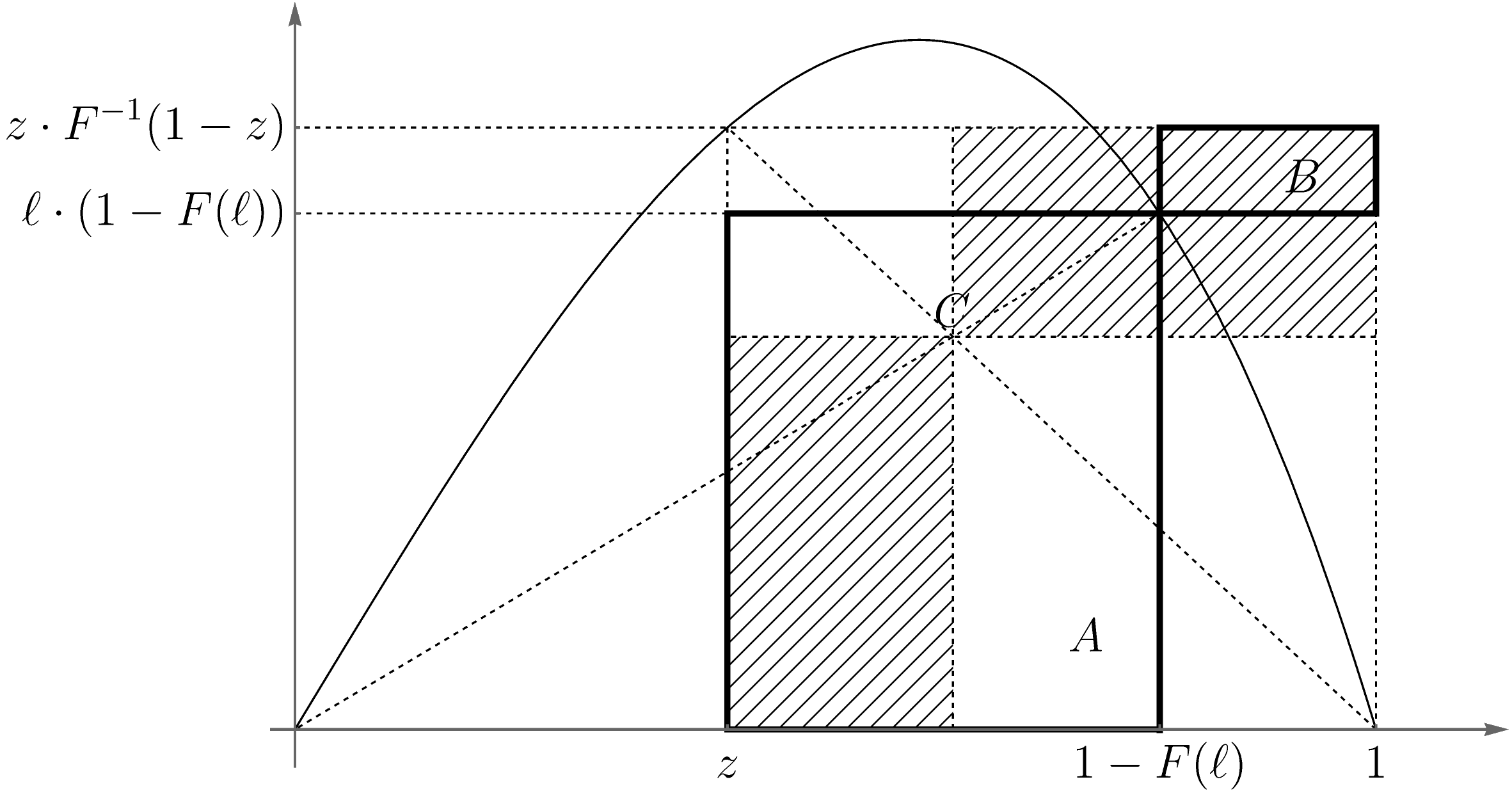}
    
    \caption{An illustration of the argument in the proof of Lemma~\ref{lem:triangular} (Case 2) using the revenue-quantile curve $q\cdot F^{-1}(1-q)$. To compare the two rectangles with thick sides $A$ and $B$, we show that due to concavity of the revenue curve $q\cdot F^{-1}(1-q)$, rectangle $A$ contains the lower-left shaded region corresponding to rectangle $\widetilde{A}$ and the upper-right shaded region corresponding to rectangle $\widetilde{B}$ contains rectangle $B$. To complete the proof, it suffices to observe that the two shaded rectangles have equal areas. }
    \label{fig:triangle}
\end{figure}

\paragraph{Case 2: $\ell\cdot (1-F(\ell)) < z\cdot F^{-1}(1-z)$.} Denote by $A$ the axis-parallel rectangle defined by points $(z,0)$ and $(1-F(\ell),\ell\cdot (1-F(\ell)))$ and observe that its area is equal to $\ell\cdot (1-F(\ell))\cdot (1-F(\ell)-z)$. Also, denote by $B$ the axis-parallel rectangle defined by points $(1-F(\ell),\ell\cdot (1-F(\ell)))$ and $(1,z\cdot (F^{-1}(1-z))$ and observe that its area is equal to $F(\ell)(z\cdot F^{-1}(1-z)-\ell\cdot(1-F(\ell))$. We claim that rectangle $A$ is at least as large as rectangle $B$ and, hence,
\begin{align*}
    &z\cdot F^{-1}(1-z) \leq z\cdot F^{-1}(1-z)+\area(A)-\area(B)\\
    &=z\cdot F^{-1}(1-z)+(1-F(\ell)-z)\cdot \ell\cdot (1-F(\ell))-F(\ell)(z\cdot F^{-1}(1-z)-\ell\cdot(1-F(\ell)))\\
    &=(1-F(\ell)\cdot (z\cdot F^{-1}(1-z)+\ell\cdot (1-z)), 
\end{align*}
which is equivalent to the stated inequality. The argument is illustrated in Figure~\ref{fig:triangle}. 

To prove the claim $\area(A)\geq \area(B)$, it suffices to consider the point $C$ at the intersection of the line connecting points $(z,z\cdot F^{-1}(1-z))$ and $(1,0)$ with the line connecting points $(0,0)$ and $(1-F(\ell),\ell\cdot (1-F(\ell)))$. Now, denote by $\widetilde{A}$ and $\widetilde{B}$ the axis-parallel rectangles defined by points $(z,0)$ and $C$ and by points $C$ and $(1,z\cdot F^{-1}(1-z))$, respectively. Due to the concavity of the revenue-quantile curve $q\cdot F^{-1}(1-q)$, the point $(1-F(\ell),\ell\cdot(1-F(\ell)))$ lies inside rectangle $\widetilde{B}$ and point $C$ lies inside rectangle $A$. Thus, $\area(A)\geq \area(\widetilde{A})$ and $\area(B)\leq \area(\widetilde{B})$.
Furthermore, observe that rectangles $\widetilde{A}$ and $\widetilde{B}$ have the same area (this would still hold even if we had picked point $C$ anywhere in the line connecting points $(z,z\cdot F^{-1}(1-z))$ and $(1,0)$). Thus, we have
\begin{align*}
    \area(A)-\area(B)\geq \area(\widetilde{A})-\area(\widetilde{B}) = 0
\end{align*}
as claimed.
\end{proof}

Now, we have
\begin{align*}
    \I(R_{q_i,\F_i}(x); (i,x)\in A) &\leq 2\cdot \I(q_i(x); (i,x)\in A) \leq 4.
\end{align*}
The first inequality follows by the definition of set $A$, and the second one by inequality (\ref{eq:relaxed-linear-pricing-constraint}). Also,
\begin{align*}
    \I(R_{q_i,\F_i}(x); (i,x)\in B) &\leq 4\cdot \I\left(\frac{q_i(x)\cdot F_{i,x}^{-1}(1-q_i(x))}{4\cdot (1-q_i(x))}; (i,x)\in B\right)\\
    &\leq 4\cdot \I\left(\frac{q_i(x)\cdot F_{i,x}^{-1}(1-q_i(x))}{2\cdot (1-q_i(x))+q_i(x)\cdot F_{i,x}^{-1}(1-q_i(x))}; (i,x)\in B\right)\\
    &\leq 4\cdot \I\left(1-F_{i,x}(2); (i,x)\in B\right)\leq 8.
\end{align*}
The first inequality is obvious, the second one follows by the definition of set $B$, the third one follows by applying Lemma~\ref{lem:triangular} with $F=F_{i,x}$, $z=q_i(x)$, and $\ell=2$, and the fourth one follows by applying inequality (\ref{eq:relaxed-linear-pricing-constraint}) with $p=2$. 
Finally, 
\begin{align*}
    &\I(R_{q_i,\F_i}(x); (i,x)\in C_{2^j} \setminus C_{2^{j+1}}) < \I\left(2^{j+1};(i,x)\in C_{2^j} \setminus C_{2^{j+1}}\right)\\
    &=2^{j+2}\cdot \I\left(\frac{1}{2};(i,x)\in C_{2^j} \setminus C_{2^{j+1}}\right)\\
    &\leq 2^{j+2}\cdot \I\left(\frac{q_i(x)\cdot F_{i,x}^{-1}(1-q_i(x))}{2^j\cdot(1-q_i(x))+q_i(x)\cdot F_{i,x}^{-1}(1-q_i(x))};(i,x)\in C_{2^j} \setminus C_{2^{j+1}}\right)\\
    &\leq 2^{j+2}\cdot \I\left(1-F_{i,x}(2^j);(i,x)\in C_{2^j} \setminus C_{2^{j+1}}\right)\\
    &\leq \frac{2^{j+3}}{2^j-1}\leq 16.
\end{align*}
The first inequality follows since $(i,x)\not\in C_{2^{j+1}}$ and, hence, $q_i(x)\cdot F_{i,x}^{-1}(1-q_i(x))<\max\{2^{j+1} \cdot(1-q_i(x),2^{j+1}\cdot q_i(x)\}\leq 2^{j+1}$. The second inequality follows since $(i,x)\in C_{2^j}$ and, hence, $q_i(x)\cdot F_{i,x}^{-1}(1-q_i(x))\geq 2^j\cdot (1-q_i(x))$. The third inequality follows by applying Lemma~\ref{lem:triangular} with $F=F_{i,x}$, $z=q_i(x)$, and $\ell=2^j$, and the fourth one by inequality (\ref{eq:relaxed-linear-pricing-constraint}) for $p=2^j$. 

Overall, the RHS of Equation (\ref{eq:summands-I}) and, consequently, the maximum objective value of mathematical program (\ref{eq:mp-original}) and the revenue gap, is upper-bounded by $4+8+16\cdot \lfloor \log{\left(256n^2\kappa^4\right)}\rfloor\leq 140+32\cdot\log{n}+64\cdot \log{\kappa}\in \bigO(\ln\kappa+\ln{n})$. The proof of Theorem~\ref{thm:rev-main-direct} is now complete.
\end{proof}

\section{Extensions and open problems}\label{sec:open}
Motivated by recent work on ``simple vs.~optimal'' revenue trade-offs in Bayesian mechanism design, we have explored the power of linear posted pricing when selling a perfectly divisible item. Our work leaves several interesting open problems. Of course, we would like to determine the tight bound on the revenue gap for linear pricing, closing the gap between our upper bound of $\bigO\left(\min\left\{\ln{\kappa}+\ln{n},\kappa^2\right\}\right)$ and our lower bound of $\Omega(\ln{\kappa})$. It is tempting to conjecture that the dependency on the number of agents is not necessary and $\Theta(\ln{\kappa})$ is the tight bound on the revenue gap. Furthermore, we would like to explore whether the revenue gap increases significantly when agents are utility-maximizing instead of shortsighted. Preliminary attempts to show a separation between the two agent models have not been successful. Finally, the optimal revenue for the ex-ante relaxation of the multi-parameter setting in Section~\ref{subsec:rev:mp} could be used to bound the revenue gap of other mechanisms for allocating a divisible item, such as the proportional mechanism~\cite{JT04,K97}, which has been extensively studied in terms of social welfare~\cite{CV16,CV18,CST16}. This could lead to results on the Bayes-Nash price of anarchy for revenue which, with a few exceptions such as \cite{CKKK14,HHT14,LLT12}, are rather sporadic in the literature.

\bibliographystyle{plain}
\bibliography{pricing.arxiv}

\end{document}